\newtheorem{theorem}{Theorem}
\newtheorem{lemma}{Lemma}
\newtheorem{definition}{Definition}
\newcommand{\myqctmp}[2][0.25]{\Qcircuit @C=#2em @R=#1em @!R}
\newcommand{\controlsq}{*!<0em,.025em>-=-<0em>{\square}}
\newcommand{\ctrlsq}[1]{\controlsq \qwx[#1] \qw}
\def\BibTeX{{\rm B\kern-.05em{\sc i\kern-.025em b}\kern-.08em
    T\kern-.1667em\lower.7ex\hbox{E}\kern-.125emX}}
\begin{document}

\title{Binary Tree Block Encoding of Classical Matrix}

\author{Zexian Li, Xiao-Ming Zhang, Chunlin Yang and Guofeng Zhang\thanks{Zexian Li and Guofeng Zhang are with Department of Applied Mathematics, The Hong Kong Polytechnic University, Hong Kong, China and also with Shenzhen Research Institute, The Hong Kong Polytechnic University, Shenzhen, China (e-mail: zexian.li@connect.polyu.hk; guofeng.zhang@polyu.edu.hk). Xiao-Ming Zhang is with Key Laboratory of Atomic and Subatomic Structure and Quantum Control (Ministry of Education),
South China Normal University, Guangzhou, China, and
Guangdong Provincial Key Laboratory of Quantum Engineering and Quantum Materials, 
Guangdong-Hong Kong Joint Laboratory of Quantum Matter,
South China Normal University, Guangzhou, China (e-mail: phyxmz@gmail.com). Chunlin Yang is with Harbin Engineering University, Heilongjiang Province, China.},~\IEEEmembership{Senior Member,~IEEE} }

\maketitle

\begin{abstract}
Block-encoding is a critical subroutine in quantum computing, enabling the transformation of classical data into a matrix representation within a quantum circuit. The resource trade-offs in simulating a block-encoding can be quantified by the circuit size, the normalization factor, and the time and space complexity of parameter computation. Previous studies have primarily focused either on the time and memory complexity of computing the parameters, or on the circuit size and normalization factor in isolation, often neglecting the balance between these trade-offs. In early fault-tolerant quantum computers, the number of qubits is limited. For a classical matrix of size $2^{n}\times 2^{n}$, our approach not only improves the time of decoupling unitary for block-encoding with time complexity $\mathcal{O}(n2^{2n})$ and memory complexity $\Theta(2^{2n})$ using only a few ancilla qubits, but also demonstrates superior resource trade-offs. Our proposed block-encoding protocol is named Binary Tree Block-encoding (\texttt{BITBLE}). Under the benchmark, \textit{size metric}, defined by the product of the number of gates and the normalization factor, numerical experiments demonstrate the improvement of both resource trade-off and classical computing time efficiency of the \texttt{BITBLE} protocol. The algorithms are all open-source.
\end{abstract}

\begin{IEEEkeywords}
Quantum simulation, circuit size, depth-space tradeoff, quantum circuit, state preparation, unitary synthesis.
\end{IEEEkeywords}



\section{Introduction}
\label{section Introduction}
\IEEEPARstart{Q}{uantum} signal processing (QSP)~\cite{PhysRevLett.118.010501}, quantum singular value transformation (QSVT)~\cite{10.1145/3313276.3316366,PRXQuantum.2.040203} are powerful frameworks for solving high-dimensional eigenvalue and singular value problems. Under this framework, eigenvalue or singular value problems can be solved without solving the eigenvalue decomposition and singular value decomposition, providing an alternative way to solve numerical problems. While the quantum circuit's simulation is difficult as the time and memory trade-off of the simulation improves exponentially with the qubit number $n$ as $\mathcal{O}(2^n)$. 

The input model employed in these methods is based on block-encoding. Leveraging the block-encoding framework, several seminal quantum algorithms---such as Hamiltonian simulation~\cite{PhysRevLett.118.010501,osti_1609315}, Grover's algorithm, the quantum Fourier transform, and the HHL algorithm~\cite{PhysRevLett.120.050502,PhysRevLett.103.150502}---can be interpreted as instances of the Quantum Singular Value Transformation (QSVT)~\cite{PRXQuantum.2.040203}. The embedding of a matrix \( A \) is typically realized as the leading principal block of a larger unitary matrix \( U \) acting on the Hilbert space, expressed as:
$$
U = \begin{bmatrix}
    A & * \\
    * & *
\end{bmatrix},
$$
where \( * \) denotes arbitrary matrix elements. However, if \( \Vert A\Vert_2 \geq 1 \), such an embedding is impossible. To address this limitation, a formal definition of a block-encoding for an \( n \)-qubit matrix \( A \) in an \( m \)-qubit system is provided as follows~\cite{10.1145/3313276.3316366}:
\begin{definition}[\cite{10.1145/3313276.3316366}]
Let $a,n,m\in\mathbb{N}$ with $m=a+n$. Then an $m$-qubit unitary $U$ is a $(\alpha,a,\varepsilon)$-block-encoding of an $n$-qubit operator $A$ if
$$
\Vert A - \alpha \left(\bra{0}^{\otimes a}\otimes I_n\right) U \left(\ket{0}^{\otimes a}\otimes I_n\right)\Vert\leq \varepsilon. 
$$
\end{definition}
The parameters $(\alpha,a,\varepsilon)$ are, respectively, the normalization factor~\cite{PhysRevA.110.042427} (also called subnormalization~\cite{Sunderhauf2024blockencoding}, subnormalization factor~\cite{9951292}, or constant factors~\cite{10012045}) for encoding matrices of arbitrary norm, the number of ancilla qubits used in the block-encoding and epsilon. The normalization factor of block-encoding is a crucial parameter that influences the circuit depth in quantum algorithms utilizing quantum signal processing (QSP)~\cite{PhysRevLett.118.010501} and quantum singular value transformation (QSVT)~\cite{9951292,PRXQuantum.2.040203}. For a classical data matrix of size $2^n \times 2^n$, a block-encoding protocol~\cite{10012045} provides a procedure for converting classical data into block-encoding using a quantum random access memory (QRAM) model~\cite{PRXQuantum.2.020311}. Building on this procedure, block-encoding protocols with near-optimal gate complexities have been extensively discussed~\cite{Zhang2024}. However, these block-encoding protocols require at least $\mathcal{O}(2^n/n)$ ancillary qubits~\cite{Yuan2023optimalcontrolled}, making it infeasible to simulate such large-scale block-encodings on a classical computer.

To simulate block-encodings in quantum signal processing and quantum singular value transformation on a classical computer, a fast approximate quantum circuit for block-encodings (\texttt{FABLE}) was proposed by Camps et.~\cite{9951292}. \texttt{FABLE} circuits characterize the sparsity of a class of matrices in the Walsh–Hadamard domain, and they can be modified to accommodate highly compressible circuits to block-encode a certain subset of sparse matrices~\cite{kuklinski2024sfablelsfablefastapproximate}. However, these block-encodings have a high normalization factor proportional to $2^n$, which incurs a high classical trade-off when encoding a high-dimensional matrix. Block-encoding for matrices of product operators~\cite{PhysRevA.110.042427} and structured matrices~\cite{10012045} has also been explored, but these methods are tailored to specific matrices rather than general matrices.

Our main contribution is to provide fast computational methods for block-encoding classical matrices with few ancilla qubits, low computing time, and low quantum gate counts. These methods are based on new numerical algorithms for decoupling multiplexor operations~\cite{PhysRevLett.93.130502}.

For single- and two-qubit gate decomposition of multiplexor operations with $2^n$ parameters, we propose permutative demultiplexor and recursive demultiplexor, achieving $\mathcal{O}(n2^n)$ time complexity. These generalize uniformly controlled rotation~\cite{PhysRevLett.93.130502} and recursive cancellation of uniformly controlled rotation~\cite{10.1145/1120725.1120847,1201583}.

For encoding $2^n \times 2^n$ matrices, we introduce Binary Tree Block-Encoding (\texttt{BITBLE}), based on the above decompositions. Its parameter computation has $\mathcal{O}(n2^{2n})$ time complexity and $\Theta(2^{2n})$ memory complexity, with normalization factors $\|A\|_F$ or $\mu_p(A)$. The recursive multiplexor operation technique in \texttt{BITBLE} is compatible with parallel computation. Numerical experiments demonstrate its advantages even in serial execution, establishing \texttt{BITBLE} as the fastest known algorithm for decoupling block-encodings of general matrices.

The content of this article is organized as follows: In Section~\ref{section Introduction}, the motivation and results of multiplexor operations and block-encoding protocols are discussed, and relevant notations are introduced. In Section~\ref{section multiplexor operations decomposition}, two decomposition methods for multiplexor operations—permutative demultiplexor and recursive demultiplexor—are proposed. In Section~\ref{section bitble}, \texttt{BITBLE} protocols and parameter-finding methods are introduced, where the time- and memory-complexity for computing single-qubit parameters is proven. In Section~\ref{section Examples}, numerical results for simulating this protocol using several examples are provided. Finally, Section~\ref{section Conclusion} presents the conclusion. MATLAB implementations of the \texttt{BITBLE} protocols, developed using QCLAB~\cite{qclab}, are publicly available at \href{https://github.com/zexianLIPolyU/BITBLE-SIABLE_matlab}{https://github.com/zexianLIPolyU/BITBLE-SIABLE\_matlab}.

Without loss of generality, we assume in the remainder of this paper that the matrix size is $N \times N$ with $N = 2^n$. The notation $\beta_{\cdot,k}$, $\beta_{k,\cdot}$, and $\beta_{\cdot}$ stands for vectors as $\beta_{\cdot,k}\equiv (\beta_{1,k},\beta_{2,k},\cdots)$, $\beta_{k,\cdot}\equiv (\beta_{k,1},\beta_{k,2},\cdots)^T$ and $\beta_{\cdot}\equiv (\beta_{1},\beta_{2},\cdots)$. The symbol $(\beta_{k,1},\beta_{k,2},\cdots)^T$ stands for the transport of a row vector $(\beta_{k,1},\beta_{k,2},\cdots)$, and $(\beta_{k,1},\beta_{k,2},\cdots)^*$ stands for the conjugate transport of that vector. The symbol $U^\dagger$ stands for the conjugate transport of the unitary matrix $U$. 


\section{Decouple Multiplexor Operation}
\label{section multiplexor operations decomposition}

A multiplexor operation (multiplexed rotations) \textit{controlled-}$R_{\bm{\alpha}}^{[\beta_{j}]_{j=0}^{2^n-1}}$ can be represented in a mathematical form of 
\begin{equation}
    \sum_{j=0}^{2^n-1} \ket{j}\bra{j} \otimes R_{\bm{\alpha}}^{\beta_j} = \begin{pmatrix}
        R_{\bm{\alpha}}^{\beta_0} & & & \\
        & R_{\bm{\alpha}}^{\beta_1} & & & \\
        & & \ddots & \\
        & & & R_{\bm{\alpha}}^{\beta_{2^n-1}}
    \end{pmatrix},
    \label{eq Multiplexed Rotation}
\end{equation}
and the rotation matrix $R_{\bm{\alpha}}^{\beta}\in\mathbb{C}^{2\times 2}$ is given by
$$
    R_{\bm{\alpha}}^{\beta} = e^{i{\bm{\alpha}\cdot\bm{\sigma}} \beta/2} = I \cos\frac{\beta}{2} + i \bm{\alpha}\cdot\bm{\sigma}\sin\frac{\beta}{2},
$$
where $\bm{\alpha}\cdot\bm{\sigma} = a_x X + a_y Y + a_z Z$ involves the Pauli matrices $X = \begin{bmatrix}
    0 & 1\\ 1 & 0
\end{bmatrix}$, $Y = \begin{bmatrix}
    0 & -i\\ i & 0
\end{bmatrix}$ and $Z = \begin{bmatrix}
    1 & 0\\ 0 & -1
\end{bmatrix}$, and $(a_x,a_y,a_z)\in\mathbb{R}^3$ is a real unit vector.

Multiplexor operations can be implemented by single- and two-qubit gates. In this article, we denote the single- and two-qubit decomposition of a multiplexor operation with the controlled qubit in the lowest position of a circuit~\cite{PhysRevLett.93.130502} as `\textit{uniformly controlled rotation}'. The single-qubit rotation parameters $[\tilde{\beta}_{j}]_{j=0}^{2^n-1}$ in the  uniformly controlled rotation are transformed from $[\beta_{j}]_{j=0}^{2^n-1}$ through a linear system. The matrix $M^n$ in this system, of size $2^n \times 2^n$, is generated by the product of the Walsh-Hadamard transformation $H^{\otimes n}$ and a Gray permutation matrix $P_G$, where $P_G$ transforms $n$-bit binary ordering into $n$-bit Gray code ordering~\cite{9951292,PhysRevLett.93.130502}. That is, $M^n \equiv H^{\otimes n} P_G$. The linear system is given by
    \begin{equation}
        M^n \left([\tilde{\beta}_{j}]_{j=0}^{2^n-1}\right)^T = \left([\beta_{j}]_{j=0}^{2^n-1}\right)^T,
        \label{eq Walsh-Hadamard}
    \end{equation}
    where $\left([\tilde{\beta}_{j}]_{j=0}^{2^n-1}\right)^T = (\tilde{\beta}_{0}, \ldots, \tilde{\beta}_{2^n-1})^T$, $\left([\beta_{j}]_{j=0}^{2^n-1}\right)^T = (\beta_{0}, \ldots, \beta_{2^n-1})^T$, and $n \in \mathbb{N}_+$. An example of uniformly controlled rotation in size $n=2$ is shown in Fig.~\ref{fig urc}.

\begin{figure}[htbp]
	\centering
    \[\resizebox{0.5\textwidth}{!}{
	$
	\begin{myqcircuit}
		& \ctrlsq{1} & \qw \\
		& \ctrlsq{1} & \qw \\
		& \gate{R_{\bm{\alpha}}^{\left[\beta_j\right]_{j=0}^{3}}} & \qw \\
	\end{myqcircuit} =  
	\begin{myqcircuit}
		& \qw & \qw & \qw & \ctrl{2} & \qw & \qw & \qw & \ctrl{2} & \qw \\
		& \qw & \ctrl{1} & \qw & \qw & \qw & \ctrl{1} & \qw & \qw & \qw \\
		& \gate{R_{\bm{\alpha}}^{\tilde{\beta}_0}} & \targ & \gate{R_{\bm{\alpha}}^{\tilde{\beta}_1}} & \targ & \gate{R_{\bm{\alpha}}^{\tilde{\beta}_2}} & \targ & \gate{R_{\bm{\alpha}}^{\tilde{\beta}_3}} & \targ & \qw \\
	\end{myqcircuit}
	$
	}\]
    \hfill
    \begin{subfigure}[htbp]{0.49\textwidth}
    \centering
     \includegraphics[width=1.0\textwidth]{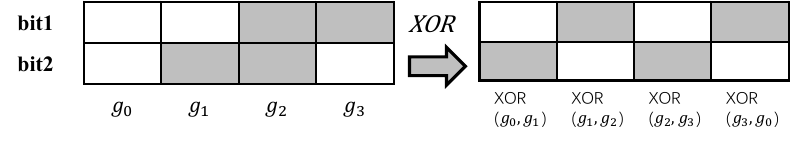}
    \end{subfigure}
	\caption{Uniformly controlled rotation decomposition~\cite{PhysRevLett.93.130502} for multiplexor operations of Eq.~\eqref{eq Multiplexed Rotation} with size $n=2$. The control nodes of the CNOT gates in this decomposition are determined by performing an \textsc{Exclusive Or} (XOR) operation on $n$-bit Gray codes. Specifically, these nodes correspond to the positions of the `$1$' bits (highlighted in gray) in the XOR result between two-bit Gray codes.}
    \label{fig urc}
\end{figure}

 \begin{lemma}[\cite{Amankwah2022}]
    The rotation parameters of the decomposition of a multiplexor operation controlled-$R_{\bm\alpha}^{[\beta_j]_{j=0}^{2^n-1}}$ can be calculated in classical computational time $\mathcal{O}(n2^n)$.
    \label{lemma time of uniformly controlled gates}
 \end{lemma}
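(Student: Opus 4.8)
The plan is to avoid inverting the dense $2^n\times 2^n$ matrix $M^n$ directly and instead exploit the factorization $M^n = H^{\otimes n} P_G$ supplied in Eq.~\eqref{eq Walsh-Hadamard}. Because $P_G$ is a permutation matrix and $H^{\otimes n}$ equals its own inverse up to the scalar $2^{-n}$, the linear system can be solved in closed form as
$$
\left([\tilde\beta_j]_{j=0}^{2^n-1}\right)^T = P_G^{-1}\,(H^{\otimes n})^{-1}\left([\beta_j]_{j=0}^{2^n-1}\right)^T,
$$
so the entire parameter computation reduces to one inverse Walsh--Hadamard transform followed by one Gray-code reordering.

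The key step is to bound the cost of applying $(H^{\otimes n})^{-1}$. Rather than the naive dense product, which costs $\mathcal{O}(2^{2n})$, I would use the recursive butterfly factorization of the tensor power $H^{\otimes n}$ into $n$ sparse stages, each acting on a single tensor slot and performing $2^{n-1}$ independent $2\times 2$ add/subtract operations. This is precisely the Fast Walsh--Hadamard Transform: each of the $n$ stages costs $\mathcal{O}(2^n)$, giving $\mathcal{O}(n2^n)$ overall, with the $2^{-n}$ rescaling contributing only a further $\mathcal{O}(2^n)$.

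It remains to account for the permutation $P_G^{-1}$. Applying a permutation to a length-$2^n$ vector is an index remapping in which each destination index is produced by a binary/Gray conversion in $\mathcal{O}(n)$ bit operations; the reordering therefore costs at most $\mathcal{O}(n2^n)$, and drops to $\mathcal{O}(2^n)$ with a precomputed index table. Summing the two contributions yields the claimed $\mathcal{O}(n2^n)$ bound. I expect the main obstacle to be the first step---verifying that $H^{\otimes n}$ genuinely admits the recursive factorization into $n$ single-slot sparse factors, so that the $\mathcal{O}(2^{2n})$ dense multiply can legitimately be replaced by the $\mathcal{O}(n2^n)$ butterfly. Once that structural fact is established, the remaining complexity accounting is routine and the permutation step is clearly subdominant.
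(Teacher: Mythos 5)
Your proposal is correct and is essentially the paper's own proof: the paper's one-line argument just invokes the scaled fast Walsh--Hadamard transform together with the Gray-code permutation for the $\mathcal{O}(n2^n)$ bound, which is exactly your computation $\tilde\beta = P_G^{-1}(H^{\otimes n})^{-1}\beta$ via an $n$-stage butterfly followed by an index remapping. The structural fact you flag as the main obstacle is standard and immediate---$H^{\otimes n}=\prod_{k=1}^{n}\bigl(I_{2^{k-1}}\otimes H\otimes I_{2^{n-k}}\bigr)$, each factor applied in $\mathcal{O}(2^n)$ arithmetic operations---so there is no gap.
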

 
 \begin{proof}
     The single-qubit gates' rotation parameters of uniformly controlled gates in Eq.~\eqref{eq Walsh-Hadamard} can be calculated in time $\mathcal{O}(n2^{n})$ by the scaled fast Walsh–Hadamard transform~\cite{Amankwah2022,1674569} and the Gray code permutation~\cite{9951292,Amankwah2022}.
 \end{proof}


 The uniformly controlled rotation assumes that the last node serves as the control node. However, if the control node of a multiplexor operation is not the last node in a quantum circuit, the mathematical representation, control nodes, and rotation parameters of the multiplexor operation will differ. In the following two subsections, we present two distinct decomposition methods for computing the rotation parameters of single-qubit gates in multiplexor operations where the control node is not the last node in a quantum circuit. 


\subsection{Permutative demultiplexor}
\label{subsecion multiplexor operations decomposition}



 The first decoupling method, \textit{permutative demultiplexor}, is derived from the permutation of indices in uniformly controlled rotations~\cite{PhysRevLett.93.130502}. 
 Consequently, the permutative demultiplexor can be implemented using an alternating sequence of CNOT gates and single-qubit rotations, similar to the  uniformly controlled rotation. The position of the control node in the $j$th CNOT gate of the permutative demultiplexor is determined by the `$1$' bit (highlighted in gray) in $\text{XOR}(g_j, g_{j+1})$, where $\{g_j\}$ represents the $j$th binary reflected Gray codes, and $\text{XOR}$ denotes the `Exclusive Or' logical operation applied to two binary codes. The single-qubit rotation parameters $\{\tilde{\beta}_j\}_{j=0}^{2^n-1}$ in a permutative demultiplexor can be transformed from $\{\beta_j\}_{j=0}^{2^n-1}$ by Eq.~\eqref{eq Walsh-Hadamard} as the same as the uniformly controlled rotation. An example of control nodes in permutative demultiplexor is illustrated in Fig.~\ref{fig multiplexor operation decomposition1}.

 \begin{figure*}
    \begin{subfigure}[htbp]{1.0\textwidth}
    \centering
     \includegraphics[width=0.8\textwidth]{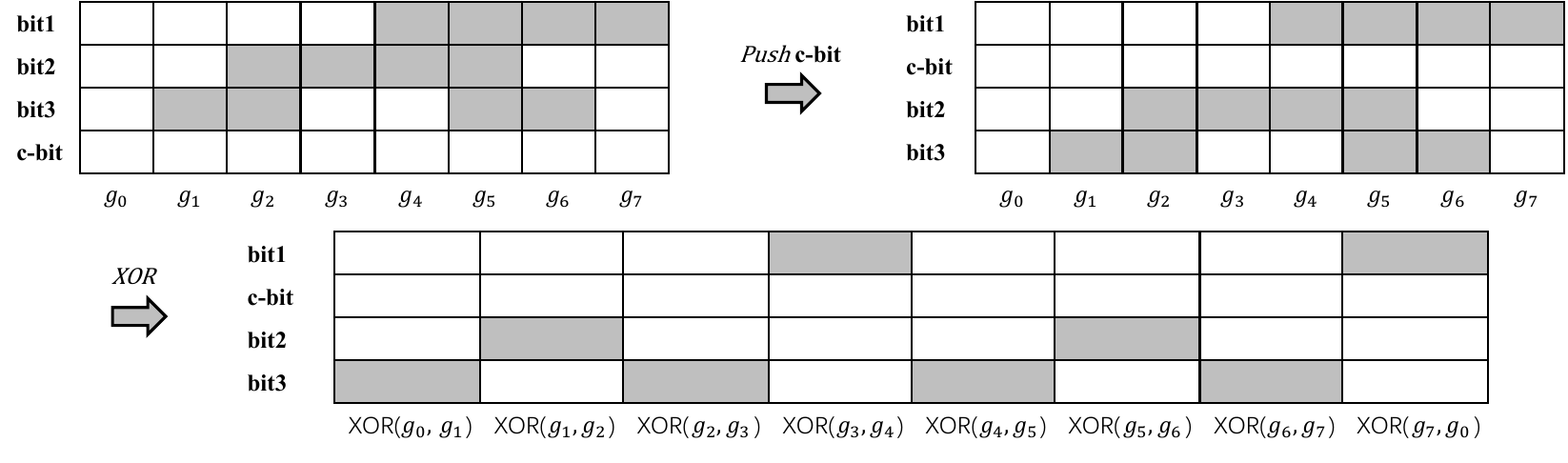}
    \caption{The binary reflected $3$-bit Gray code is used to define the positions of the control nodes. The counting order of `1' bit (highlighted in gray) in the transformed Gray code reveals the indices of the control nodes is $\{3, 2, 3, 1, 3, 2, 3, 1\}$.}
    \label{fig control index}
    \end{subfigure}
\\[\baselineskip]
    \begin{subfigure}[htbp]{1.0\textwidth}
    \[\begin{myqcircuit}
		& \ctrlsq{1} & \qw \\
		& \gate{R_{\bm{\alpha}}^{\left[\beta_j\right]_{j=0}^7}} & \qw \\
		& \ctrlsq{-1} & \qw \\
		& \ctrlsq{-1} & \qw \\
	\end{myqcircuit} \quad = \quad
	\begin{myqcircuit}
		& \qw & \qw & \qw & \qw & \qw & \qw & \qw & \ctrl{1} & \qw & \qw & \qw & \qw & \qw & \qw & \qw & \ctrl{1} & \qw \\
		& \gate{R_{\bm{\alpha}}^{\tilde{\beta}_{0}}} & \targ & \gate{R_{\bm{\alpha}}^{\tilde{\beta}_{1}}} & \targ & \gate{R_{\bm{\alpha}}^{\tilde{\beta}_{2}}} & \targ & \gate{R_{\bm{\alpha}}^{\tilde{\beta}_{3}}} & \targ & \gate{R_{\bm{\alpha}}^{\tilde{\beta}_{4}}} & \targ & \gate{R_{\bm{\alpha}}^{\tilde{\beta}_{5}}} & \targ & \gate{R_{\bm{\alpha}}^{\tilde{\beta}_{6}}} & \targ & \gate{R_{\bm{\alpha}}^{\tilde{\beta}_{6}}} & \targ & \qw \\
		& \qw & \qw & \qw & \ctrl{-1} & \qw & \qw & \qw & \qw & \qw & \qw & \qw & \ctrl{-1} & \qw & \qw & \qw & \qw & \qw \\
		& \qw & \ctrl{-2} & \qw & \qw & \qw & \ctrl{-2} & \qw & \qw & \qw & \ctrl{-2} & \qw & \qw & \qw & \ctrl{-2} & \qw & \qw & \qw \\
	\end{myqcircuit}\]
    \centering
    \caption{The quantum circuit realizes controlled-$R_{\bm{\alpha}}^{[\beta_{j}]_{j=0}^{7}}$ based on permutative demultiplexor, where the control node of CNOT gate behind the $j$th single-qubit rotation $R_{\bm{\alpha}}^{\tilde{\beta}_j}$ is determined by gray part of transformed Gray codes as described in Fig.~\ref{fig control index}, and the single-qubit rotation parameters $\{\tilde{\beta}_j\}_{j=0}^{7}$ are computed using Eq.~\eqref{eq Walsh-Hadamard}.}
    \label{subfig permutative multiplexor operations}
    \end{subfigure}
\caption{Permutative demultiplexor implement of controlled-$R_{\bm\alpha}^{[\beta_j]_{j=0}^{2^n-1}}$ with $k=1$ and $n=3$.}
\label{fig multiplexor operation decomposition1}
\end{figure*}


\subsection{Recursive demultiplexor}
\label{subsection Recursive demultiplexor}
    
  The second decoupling method, \textit{recursive demultiplexor}, leverages the recursive decomposition property of controlled multiplexor operations. As established in~\cite{10.1145/1120725.1120847}, a multiplexor operation acting on a target qubit (controlled by another qubit) can be decomposed into multiple simpler multiplexor operations interleaved with CNOT gates. 

   This leads to a key practical consequence: when a multiplexor operation appears in the middle of a quantum circuit (i.e., its control qubit is neither the first nor last qubit), the decomposition requires two recursive applications. Each recursion introduces additional CNOT gates and splits the original operation into progressively simpler multiplexor operations.

  Consider a multiplexor operation control-$R_{\bm\alpha}^{[\beta_j]_{j=0}^{2^n-1}}$ with $k$ control nodes in the upper part of the circuit and $n-k$ control nodes in the lower part of the circuit. The position of the control qubit in the recursive demultiplexor follows the specific rule.
  
  In the first recursion, the indices of the control qubits are determined by the binary reflected Gray code in the upper part of the circuit. An example of the first recursion of recursive demultiplexor (with $k=1,n=3$) is shown in Fig.~\ref{fig:The first recursion of recursive demultiplexor}. The rotation parameters for the multiplexor operation decomposition of the first recursion can be computed by
  \begin{equation}
      \begin{aligned}
        &\begin{bmatrix}
         \tilde\beta_{i,j}^{(1)}
      \end{bmatrix}^{i=1,\cdots,2^k}_{j=0,\cdots,2^{n-k}-1} \\
      &\quad = (M^k)^{-1}\text{reshape}\left(\begin{bmatrix}
        \beta_{j}
      \end{bmatrix}_{j=0}^{2^{n}-1},[2^k,2^{n-k}] \right),
      \end{aligned}
  \label{eq first recursion}
  \end{equation}
  where the matrix $M^k$ is a $2^k$ dimensional Walsh-Hadamard transform defined in Eq.~\eqref{eq Walsh-Hadamard}, and $reshape(\beta,[n_1,n_2])$ reshapes $\beta$ into a $n_1$-by-$n_2$ matrix;
  
  In the second recursion, the indices of the control qubits are determined by the binary reflected Gray code in the lower part of the circuit. The rotation parameters of the second recursion can be computed by
  \begin{equation}
  \begin{aligned}
    &\begin{bmatrix}
     \tilde\beta_{i,j}^{(2)}
      \end{bmatrix}^{i=1,\cdots,2^k}_{j=0,\cdots,2^{n-k}-1}\\
    &\quad   = \left[(M^{n-k})^{-1}\left[\begin{bmatrix}
         \tilde\beta_{i,j}^{(1)}
      \end{bmatrix}^{i=1,\cdots,2^k}_{j=0,\cdots,2^{n-k}-1}\right]^T \right]^T.
  \end{aligned}
  \label{eq second recursion}
  \end{equation}

  \begin{figure}
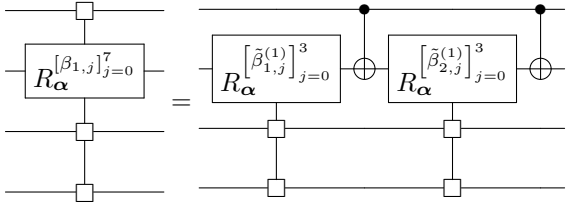

    \centering
    \[\begin{myqcircuit}
		& \ctrlsq{1} & \qw \\
		& \gate{R_{\bm{\alpha}}^{\left[\beta_{1,j}\right]_{j=0}^{7}}} & \qw \\
		& \ctrlsq{-1} & \qw \\
		& \ctrlsq{-1} & \qw \\
	\end{myqcircuit} = 
	\begin{myqcircuit*}{-0.3}{0.5}
		& \qw & \ctrl{1} & \qw & \ctrl{1} & \qw \\
		& \gate{R_{\bm{\alpha}}^{\left[\tilde{\beta}_{1,j}^{(1)}\right]_{j=0}^{3}}} & \targ & \gate{R_{\bm{\alpha}}^{\left[\tilde{\beta}_{2,j}^{(1)}\right]_{j=0}^{3}}} & \targ & \qw \\
		& \ctrlsq{-1} & \qw & \ctrlsq{-1} & \qw & \qw \\
		& \ctrlsq{-1} & \qw & \ctrlsq{-1} & \qw & \qw \\
	\end{myqcircuit*}\]
    \caption{The first recursion of recursive demultiplexor implements controlled--$R_{\alpha}^{[\beta_j]_{j=0}^7}$ operations ($k=1,n=3$) with the controlled qubit on the second qubit.}
    \label{fig:The first recursion of recursive demultiplexor}
\end{figure}

  The position of the control qubits of CNOTs in the recursive multiplexor operation can be determined by the `1' bit of transformed Gray codes. Specifically, the control nodes in the upper part are determined by the $k$-bit Gray codes composed of `\textbf{u-bit}', the control nodes in the lower part are determined by another $(n-k)$-bit Gray code composed of `\textbf{l-bit}', and the controlled node is denoted as `\textbf{c-bit}'. To determine the positions of the control nodes, the Exclusive Or logical operation (\text{XOR}) is applied to the binary codes in the upper and lower parts separately. Then, the codes in the lower part are broadcast based on each Gray code in the upper part. The position of the control node in the $j$th CNOT gate of the permutative demultiplexor is determined by the `$1$' bit after this broadcast operation. The structure contains $2^k$ idle single-qubit gates periodically placed between consecutive CNOTs, occurring every $2^{n-k}$ single-qubit rotations. An example of recursive multiplexor operation decomposition ($k=1,n=3$) is illustrated in Fig.~\ref{fig multiplexor operation decomposition2}.

 \begin{lemma}
    A multiplexor operation controlled-$R_{\bm\alpha}^{[\beta_j]_{j=0}^{2^n-1}}$ can be decoupled into single- and two-qubit gates using recursive demultiplexor in classical computational time $\mathcal{O}(n {2^n})$.
    \label{lemma time of permutative multiplexor operations}
 \end{lemma}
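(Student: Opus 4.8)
The plan is to bound separately the cost of the two recursions in the construction and show their sum is $\mathcal{O}(n2^n)$, while checking that the gate count and control-node bookkeeping stay within the same bound. The validity of the decoupling itself is already supplied by the recursive decomposition of~\cite{10.1145/1120725.1120847} described above; what remains is the complexity accounting for computing the rotation parameters $\tilde\beta^{(2)}_{i,j}$ via Eqs.~\eqref{eq first recursion}--\eqref{eq second recursion} and for laying out the resulting circuit.

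First I would analyze the first recursion, Eq.~\eqref{eq first recursion}. Reshaping $[\beta_j]_{j=0}^{2^n-1}$ into a $2^k\times 2^{n-k}$ matrix is a pure memory rearrangement costing $\mathcal{O}(2^n)$. The multiplication by $(M^k)^{-1}$ acts independently on each of the $2^{n-k}$ columns, each of length $2^k$. Since $M^k=H^{\otimes k}P_G$, its inverse factorizes as a Gray-code permutation followed by a scaled Walsh--Hadamard transform, so by the same fast-transform argument invoked in Lemma~\ref{lemma time of uniformly controlled gates} each column is processed in time $\mathcal{O}(k2^k)$. Summing over the $2^{n-k}$ columns yields $\mathcal{O}(k2^{n})$.

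Next I would analyze the second recursion, Eq.~\eqref{eq second recursion}. The two transpositions cost $\mathcal{O}(2^n)$, and the multiplication by $(M^{n-k})^{-1}$ now acts on the $2^k$ columns of length $2^{n-k}$; each such column costs $\mathcal{O}((n-k)2^{n-k})$ by the fast Walsh--Hadamard transform together with the Gray-code permutation, giving $\mathcal{O}((n-k)2^{n})$ in total. Adding the two recursions gives $\mathcal{O}(k2^n)+\mathcal{O}((n-k)2^n)=\mathcal{O}(n2^n)$, independent of the split $k$.

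Finally I would confirm that producing the circuit does not dominate: the decomposition emits $2^n$ single-qubit rotations and $\mathcal{O}(2^n)$ CNOT gates, and the control-node position of each gate is read off from the `$1$' bit of the XOR of consecutive Gray codes in the upper and lower registers, broadcast as described above, with each lookup touching $\mathcal{O}(n)$ bits, so the layout also costs $\mathcal{O}(n2^n)$. The one place a naive estimate could fail—and hence the point to get right—is that both recursions must be carried out through the fast Walsh--Hadamard transform rather than as dense $\mathcal{O}(2^{2k})$ or $\mathcal{O}(2^{2(n-k)})$ matrix products; only with the fast transform is the per-column cost logarithmic in the column length, which is precisely what keeps the combined cost at $\mathcal{O}(n2^n)$ for every choice of the split $k$.
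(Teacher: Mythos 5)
Your proposal is correct and takes essentially the same route as the paper's proof: both charge $\mathcal{O}(k2^k)$ per length-$2^k$ column (fast Walsh--Hadamard transform plus Gray permutation, as in Lemma~\ref{lemma time of uniformly controlled gates}) across the $2^{n-k}$ columns of the first recursion and $\mathcal{O}((n-k)2^{n-k})$ per column across the $2^k$ columns of the second, summing to $k2^n+(n-k)2^n=\mathcal{O}(n2^n)$. Your extra accounting for the circuit layout (gate count and Gray-code control-node lookups) is a detail the paper leaves implicit and does not alter the argument.
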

 
 \begin{proof}
    Since the operation `$\text{reshape}\left(\begin{bmatrix}
    \beta_{j}
    \end{bmatrix}_{j=0}^{2^{n}-1},[2^k,2^{n-k}] \right)$' takes time $\Theta(2^n)$, and the linear equation 
    $$
    \left(M^k\right) [\tilde\beta_{j}]_{j=0}^{2^k-1}= [\beta_{j}]_{j=0}^{2^k-1}
    $$
    can be solved by the fast Walsh–Hadamard transform and Gray permutation in time $\mathcal{O}(k2^k)$~\cite{Amankwah2022,1674569}. It takes a time of $k\times2^k\times2^{n-k} + (n-k)\times2^k\times2^{n-k} = \mathcal{O}(n2^n)$ to compute the parameters of the recursive demultiplexor in Eq.~\eqref{eq first recursion} and Eq.~\eqref{eq second recursion}. 
 \end{proof}

Both decoupling methods are compatible with implementations using only nearest-neighbor CNOT gates~\cite{PhysRevA.71.052330}.

\begin{figure*}[htbp]
    \centering
    \begin{subfigure}[htbp]{1.0\textwidth}
    \centering
    \includegraphics[width=0.8\textwidth]{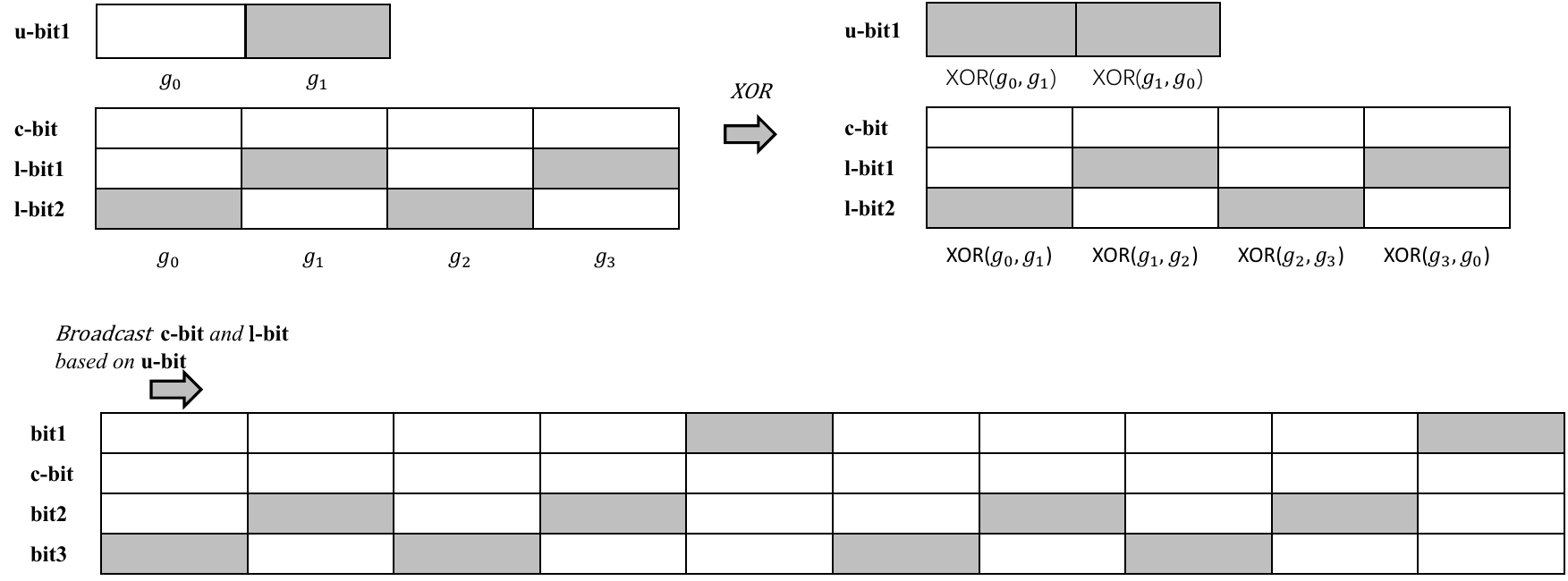}
    \caption{The binary reflected Gray code, denoted as `\textbf{u-bit}' in the upper part, and the binary reflected Gray code, denoted as `\textbf{l-bit}' in the lower part. The order of `1' bit (highlighted in gray) of the transformed Gray code indicates the indices of control nodes is $\{\text{3, 2, 3, 2, 1, 3, 2, 3, 2, 1}\}$.}
    \label{subfigure recursive multiplexor operation decomposition of lower part}
    \end{subfigure}
\\[\baselineskip]
    \begin{subfigure}[htbp]{1.0\textwidth}
    \centering
    \[
	\resizebox{\textwidth}{!}{%
	$
	\begin{myqcircuit}
		& \ctrlsq{1} & \qw \\
		& \gate{R_{\bm{\alpha}}^{\left[\beta_{1,j}\right]_{j=0}^7}} & \qw \\
		& \ctrlsq{-1} & \qw \\
		& \ctrlsq{-1} & \qw \\
	\end{myqcircuit}  = 
	\begin{myqcircuit}
		& \qw & \qw & \qw & \qw & \qw & \qw & \qw & \qw & \ctrl{1} & \qw & \qw & \qw & \qw & \qw & \qw & \qw & \qw & \ctrl{1} & \qw \\
		& \gate{R_{\bm{\alpha}}^{\tilde{\beta}_{1,0}^{(2)}}} & \targ & \gate{R_{\bm{\alpha}}^{\tilde{\beta}_{1,1}^{(2)}}} & \targ & \gate{R_{\bm{\alpha}}^{\tilde{\beta}_{1,2}^{(2)}}} & \targ & \gate{R_{\bm{\alpha}}^{\tilde{\beta}_{1,3}^{(2)}}} & \targ & \targ & \gate{R_{\bm{\alpha}}^{\tilde{\beta}_{2,0}^{(2)}}} & \targ & \gate{R_{\bm{\alpha}}^{\tilde{\beta}_{2,1}^{(2)}}} & \targ & \gate{R_{\bm{\alpha}}^{\tilde{\beta}_{2,2}^{(2)}}} & \targ & \gate{R_{\bm{\alpha}}^{\tilde{\beta}_{2,3}^{(2)}}} & \targ & \targ & \qw \\
		& \qw & \qw & \qw & \ctrl{-1} & \qw & \qw & \qw & \ctrl{-1} & \qw & \qw & \qw & \qw & \ctrl{-1} & \qw & \qw & \qw & \ctrl{-1} & \qw & \qw \\
		& \qw & \ctrl{-2} & \qw & \qw & \qw & \ctrl{-2} & \qw & \qw & \qw & \qw & \ctrl{-2} & \qw & \qw & \qw & \ctrl{-2} & \qw & \qw & \qw & \qw \\
	\end{myqcircuit}
	$}
	\]
    \caption{The second recursion of recursive demultiplexor implements controlled--$R_{\alpha}^{[\beta_j]_{j=0}^7}$ operations, where the control nodes of CNOT gate are determined by `1' bit (gray part) of the transformed Gray codes as described in Fig.~\ref{subfigure recursive multiplexor operation decomposition of lower part}. The parameters $\tilde\beta_{\cdot}^{(2)}$ are computed by Eq.~\eqref{eq first recursion} and Eq.~\eqref{eq second recursion}. In this structure, there exist two idle single-qubit gates between the $4$th and $5$th, $9$th and $10$th CNOT gates.}
    \label{fig idle single-qubit gate}
    \end{subfigure}
\caption{Recursive demultiplexor implement of controlled-$R_{\bm\alpha}^{[\beta_j]_{j=0}^{2^n-1}}$ with $k=1$ and $n=3$.}
\label{fig multiplexor operation decomposition2}
\end{figure*}


\section{Binary Tree Block-encoding}
\label{section bitble}

\subsection{Quantum state preparation by multiplexor operations}
\label{subsection quantum state preparation}

The objective of quantum state preparation is to generate a target quantum state \(\ket{\psi}\) from an initial product state \(\ket{0}^{\otimes n}\) using single- and two-qubit gates. A general quantum state can be expressed as
\begin{equation}
    \ket{\psi} = \sum_{k=0}^{N-1} e^{i\phi_k}|\psi_k|\ket{k},
    \label{eq ket psi}
\end{equation}
where $N = 2^n$, $\psi_k \in \mathbb{C}$, $\sum_{k=0}^{N-1} |\psi_k|^2 = 1$, and $\ket{k} \equiv \ket{k_n k_{n-1} \cdots k_1}$ represent the computational basis with bits $k_j$ for $j = 1, 2, \dots, n$. Numerous studies have investigated quantum state preparation~\cite{Low2024tradingtgatesdirty,Gui2024spacetimeefficient,10.1109/TCAD.2023.3297972,PhysRevLett.129.230504,SunXiaoming2023,10.5555/2011670.2011675,PhysRevResearch.3.043200,rosenthal2023querydepthupperbounds,Zhang2024}, demonstrating that for an $n$-qubit state $\ket{\psi}$ and a desired error precision $\epsilon$, a quantum state-preparation algorithm can produce an approximate state $\ket{\tilde{\psi}}$ satisfying $\Vert\ket{\psi} - \ket{\tilde{\psi}}\Vert \leq \epsilon$~\cite{Gui2024spacetimeefficient}.

The state preparation procedure using pre-computed amplitudes is well established in the literature~\cite{grover2002creatingsuperpositionscorrespondefficiently}. Additionally, the binary tree data structure for quantum states with real amplitudes was introduced by~\cite{kerenidis_et_al:LIPIcs.ITCS.2017.49}. In this work, we extend these methods to efficiently compute rotation parameters for general complex amplitudes with low time complexity. Quantum state preparation is described using a sequence of multiplexor operations, as illustrated in Fig.~\ref{quantum circuit Binary Tree}.

\begin{figure*}[htbp]
    \centering
    \[
    \begin{myqcircuit}
		& \gate{R_{Z}^{\theta_{-1}}} & \gate{R_{Y}^{\varphi_{0,0}}} & \ctrlsq{1} & \ctrlsq{1} & \gate{R_{Z}^{\theta_{0,0}}} & \ctrlsq{1} & \ctrlsq{1} & \qw \\
		& \qw & \qw & \gate{R_{Y}^{\left[\varphi_{1,j}\right]_{j=0}^1}} & \ctrlsq{1} & \qw & \gate{R_{Z}^{\left[\theta_{1,j}\right]_{j=0}^1}} & \ctrlsq{1} & \qw \\
		& \qw & \qw & \qw & \gate{R_{Y}^{\left[\varphi_{2,j}\right]_{j=0}^3}} & \qw & \qw & \gate{R_{Z}^{\left[\theta_{2,j}\right]_{j=0}^3}} & \qw \\
	\end{myqcircuit}\]
    \caption{Quantum circuit for state preparation using multiplexor operations in the case of $n=3$.}
\label{quantum circuit Binary Tree}
\end{figure*}

\begin{figure}
\begin{subfigure}[htbp]{0.49\textwidth} 
    \centering
    \includegraphics[width=1.0\linewidth]{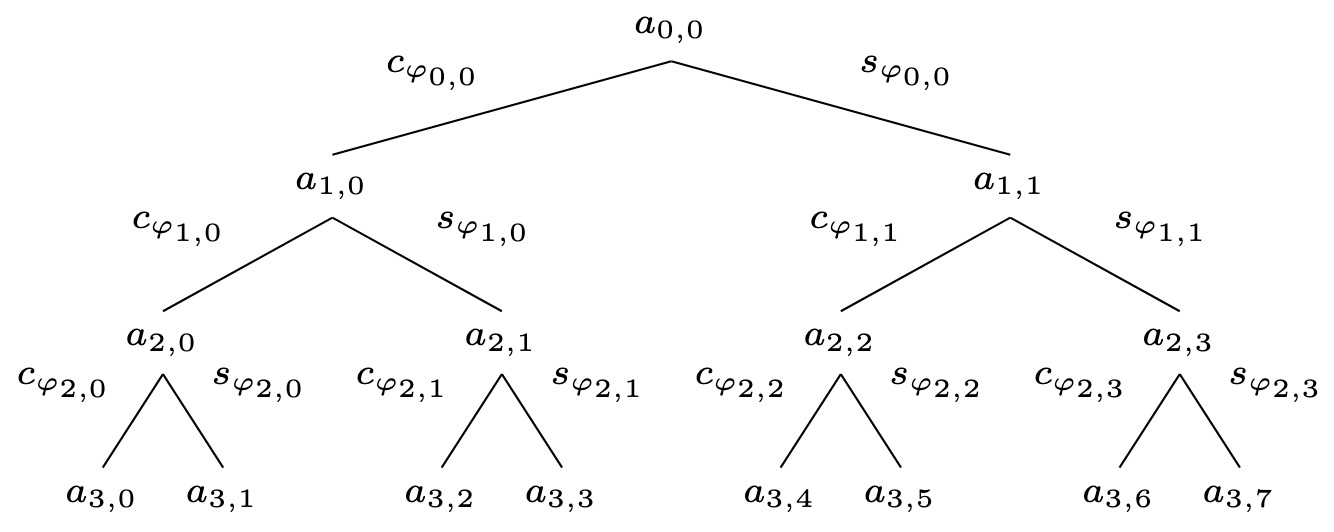}
    \caption{Rotation-$Y$ binary tree.}
    \label{fig:rotation-y binary tree}
\end{subfigure}
\hfill
\begin{subfigure}[htbp]{0.49\textwidth} 
    \centering
    \includegraphics[width=1.0\linewidth]{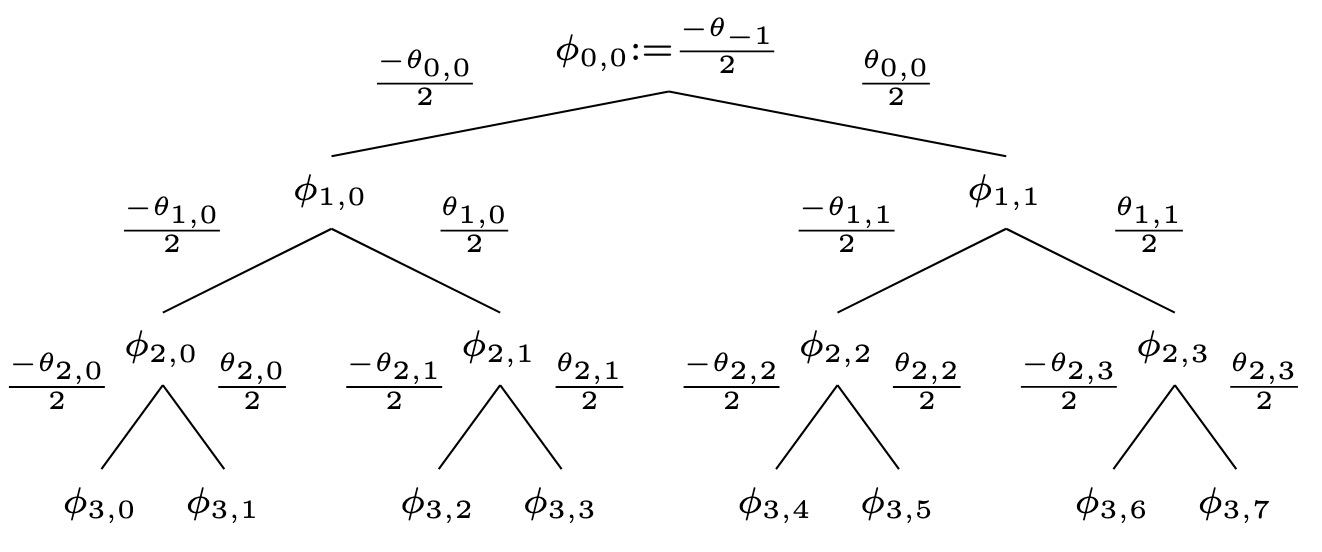}
    \label{rotation-Z binary tree}
     \caption{Rotation-$Z$ binary tree.}
     \label{fig:rotation-z binary tree}
     \end{subfigure}
     \caption{Rotation-$Y$ binary tree: Each leaf node element in rotation-$Y$ binary tree is the product of the elements on path of edges from the root node to the leaf node, where the value of the left-child edge be $c_{\varphi_{\cdot}} = \cos(\varphi_{\cdot}/2)$ and the value of the right-child edge be $s_{\varphi_{\cdot}} = \sin(\varphi_{\cdot}/2)$; Rotation-$Z$ binary tree:  Each leaf node element in rotation-$Z$ binary tree is the sum of elements on the path from the root node to the leaf node.}
\end{figure}

The rotation-$Y$ binary tree is used to generate the norms of the amplitudes $\left\{\vert\psi_k\vert\right\}_{k=0}^{2^n-1}$ (or $\left\{\psi_k\right\}_{k=0}^{2^n-1}$ for real amplitudes) under the computational basis $\{\ket{k}\}_{k=0}^{2^n-1}$. A quantum state with $2^n$ amplitudes can be generated by a rotation-$Y$ binary tree with $n$ layers. Let $\vert\psi_k\vert$ be the $k$-th leaf node $\{a_{n,k}\}_{k=0}^{2^n-1}$ in the $n$-th layer of the rotation-$Y$ binary tree. The value of a node in the $t$-th layer satisfies the product of its edge and its parent node's value, i.e.,
\begin{equation}
    a_{t,k} = \left\{
    \begin{aligned}
    &a_{t-1,\lfloor k/2\rfloor} \times  \cos(\varphi_{t-1,\lfloor k/2\rfloor}/2),\ &&\text{for even } k,\\
    & a_{t-1,\lfloor k/2\rfloor} \times  \sin(\varphi_{t-1,\lfloor k/2\rfloor}/2),\ &&\text{for odd } k,
    \end{aligned}
    \right.
    \label{eq multiple relation}
\end{equation}
for all $0 \leq t \leq n-1$ and $0 \leq k \leq 2^t - 1$. Equation~\eqref{eq multiple relation} leads to the relation
$$
\frac{\varphi_{t,k}}{2} = \text{angle}(a_{t+1,2k-1} + a_{t+1,2k}\cdot i),
$$
that is,
$$
\begin{aligned}
    e^{\frac{\varphi_{t,k}}{2} i} &= \cos\left(\frac{\varphi_{t,k}}{2}\right) + \sin\left(\frac{\varphi_{t,k}}{2}\right)\cdot i \\
    & = a_{t+1,2k-1} + a_{t+1,2k}\cdot i.
\end{aligned}
$$
The rotation-$Z$ binary tree is used to generate the phases $\left\{e^{i\phi_k}\right\}_{k=0}^{2^n-1}$ under the computational basis $\{\ket{k}\}_{k=0}^{2^n-1}$. A quantum state with $2^n$ amplitudes can be generated by a rotation-$Z$ binary tree with $n$ layers. Let $\phi_k$ be the $k$-th leaf node $\{\phi_{n,k}\}_{k=0}^{2^n-1}$ in the $n$-th layer of the rotation-$Z$ binary tree. The value of a node in the rotation-$Z$ binary tree is the sum of its edge and its parent node's value, i.e.,
\begin{equation}
    \phi_{t,k} = \phi_{t-1,\lfloor k/2\rfloor} + \frac{(-1)^{k+1}}{2} \theta_{t-1,\lfloor k/2\rfloor},
    \label{eq additive relation}
\end{equation}
for $1 \leq t \leq n$ and $0 \leq k \leq 2^t - 1$.

Since the sum of the squares of the amplitudes of a pure state is $1$, the amplitudes $\left\{\vert\psi_k\vert\right\}_{k=0}^{2^n-1}$ can be determined by $2^{n} - 1$ degrees of freedom in a rotation-$Y$ binary tree with $n$ layers. However, a rotation-$Z$ binary tree with $t$ layers provides only $2^{t} - 1$ degrees of freedom, which can be compensated by introducing a global phase $\theta_{-1}$ at the beginning, as shown in Fig.~\ref{quantum circuit Binary Tree}. The angles $\theta_{-1}$ and $\{\theta_{t,k}\}_{0 \leq t \leq n, 0 \leq k \leq 2^t - 1}$ above $n$ layers of the rotation-$Z$ binary tree can be solved by the linear system
\begin{equation}
    M_{R_Z}^t \begin{bmatrix}
        \theta_{-1}\\
        \theta_{0,0}\\
        \vdots\\
        \theta_{t-1,2^{t-1}-1}
    \end{bmatrix} = \begin{bmatrix}
        \phi_{t,0}\\
        \phi_{t,1}\\
        \vdots\\
        \phi_{t,2^{t}-1}
    \end{bmatrix},
    \label{eq linear system in rotation-Z bianry tree}
\end{equation}
for all $1 \leq t \leq n$, where the matrix elements of $M_{R_Z}^t$ are determined by Eq.~\eqref{eq additive relation}. The inverse matrix $\left(M_{R_Z}^t\right)^{-1}$ follows the recursive relationship
\begin{equation}
\begin{aligned}
        &\left(M_{R_Z}^1\right)^{-1} = \begin{bmatrix}
        -1 & -1\\
        -1 & 1\end{bmatrix},\\
        &\left(M_{R_Z}^t\right)^{-1} = \begin{bmatrix}
            \frac{1}{2} \left(M_{R_Z}^{t-1}\right)^{-1} & 0 \\
            0 & I_{2^{t-1}} \end{bmatrix} \left(\begin{bmatrix}
            -1 & -1\\
            -1 & 1\end{bmatrix} \otimes I_{2^{t-1}}\right),
\end{aligned}
\label{eq inverse Matrix in RZ}
\end{equation}
for $t = 2, \dots, n$.

\begin{figure}[htbp]
    \centering\includegraphics[width=0.5\linewidth]{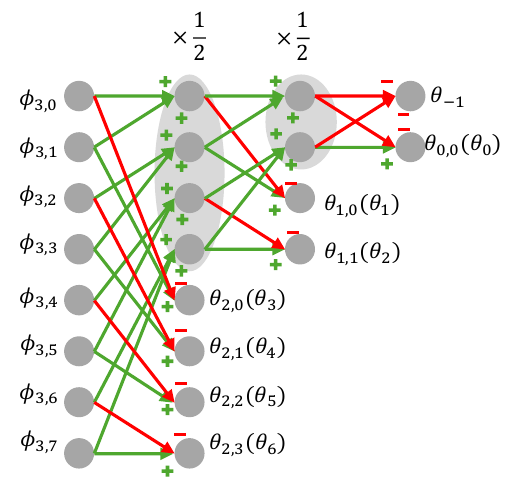}
    \caption{Computation process of R$_z$-angles with $n = 3$ layers.}
    \label{fig:RZangles}
\end{figure}

\begin{theorem}
    Given a quantum state $\ket{\psi} = \sum_{k=0}^{2^n-1} e^{i\phi_k}\vert \psi_k\vert\ket{k}\in\mathbb{C}^{n}$, the time complexity of computing the rotation-$Y$ and rotation-$Z$ parameters $\{\varphi_j\}_{j=0}^{2^n-1}$ and $\{\phi_j\}_{j=0}^{2^n-1}$ to prepare $\ket{\psi}$ is $\Theta(2^{n})$.
    \label{thm the time complextiy of computing rotation time}
\end{theorem}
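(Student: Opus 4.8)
The plan is to treat the two sets of parameters separately, show each is computed in linear time $O(2^n)$ in the number of amplitudes, and then match this against the trivial lower bound $\Omega(2^n)$ coming from the fact that any algorithm must read all $2^n$ input pairs $\{|\psi_k|,\phi_k\}_{k=0}^{2^n-1}$; the two bounds meet at $\Theta(2^n)$.

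For the rotation-$Y$ angles $\{\varphi_{t,k}\}$, I would evaluate the tree of Fig.~\ref{fig:rotation-y binary tree} bottom-up. I initialize the leaves of layer $n$ with $a_{n,k}=|\psi_k|$, and for each internal node I combine its two children in $O(1)$ arithmetic operations to produce both the node value (the norm of the children) and the associated angle $\varphi_{t,k}$ through the relation $e^{i\varphi_{t,k}/2}=a_{t+1,2k-1}+i\,a_{t+1,2k}$ established above. Since the tree has $\sum_{t=0}^{n}2^t = 2^{n+1}-1$ nodes and $2^n-1$ internal angles, each touched once at $O(1)$ cost, the entire sweep runs in $O(2^n)$.

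For the rotation-$Z$ angles, the task is to apply $(M_{R_Z}^n)^{-1}$ to the phase vector $(\phi_{n,0},\dots,\phi_{n,2^n-1})^T$, and the key is to exploit the recursive factorization of Eq.~\eqref{eq inverse Matrix in RZ} rather than forming the matrix explicitly. Applying $(M_{R_Z}^t)^{-1}$ to a length-$2^t$ vector $(u,v)$, with $u,v$ of length $2^{t-1}$, first performs the butterfly $\left(\begin{bmatrix}-1&-1\\-1&1\end{bmatrix}\otimes I_{2^{t-1}}\right)$, producing $(-u-v,\,-u+v)$ in $O(2^t)$ operations, and then applies $\tfrac12(M_{R_Z}^{t-1})^{-1}$ to the first half while leaving the second half unchanged. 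This gives a single recursive call on half the data plus $O(2^t)$ overhead, i.e. $T(t)=T(t-1)+O(2^t)$ with $T(1)=O(1)$; telescoping yields $T(n)=O(2^n+2^{n-1}+\cdots)=O(2^n)$.

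The main obstacle, and the only place where the bound could degrade to $O(n2^n)$, is this rotation-$Z$ step: one must verify that the block-diagonal factor in Eq.~\eqref{eq inverse Matrix in RZ} induces \emph{exactly one} recursive branch — on the $\tfrac12(M_{R_Z}^{t-1})^{-1}$ block, with the $I_{2^{t-1}}$ block contributing no further work — so that the total cost is a geometric series rather than the $\Theta(2^{2n})$ of a naive matrix-vector product or an $O(n2^n)$ balanced two-way recursion. Granting this, summing the two $O(2^n)$ contributions and comparing with the $\Omega(2^n)$ input-size lower bound closes the argument at $\Theta(2^n)$.
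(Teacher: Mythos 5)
Your proposal is correct and takes essentially the same route as the paper: both arguments reduce to a single bottom-up pass over the two binary trees with $O(1)$ work per node, so the per-layer costs form the geometric series $\sum_l 2^l = \Theta(2^n)$, and your rotation-$Z$ recursion $T(t)=T(t-1)+O(2^t)$ is exactly the factorization of Eq.~\eqref{eq inverse Matrix in RZ} unrolled, as implemented in Algorithm~\ref{alg paramater rz}. Your explicit check that only the $\tfrac12 (M_{R_Z}^{t-1})^{-1}$ block spawns a recursive call, together with the $\Omega(2^n)$ input-reading lower bound justifying the $\Theta$, makes rigorous what the paper's one-line proof leaves implicit, but introduces no genuinely different idea.
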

\begin{proof}
    Since the multiplexor operation-$Y$ and multiplexor operation-$Z$ consist of $2^n$ parameters that can be constructed by $n$ layers, and the $l$-layer consists of $2^l$ edges which require $\Theta(2^l)$ computations. Therefore, it takes $\sum_{l=1}^n 2^l = \Theta(2^{n})$ time to compute all the rotation parameters.
\end{proof}


\subsection{Block-encoding protocols}

This protocol follows the prescription laid out in \cite{10.1145/3313276.3316366,10012045,kerenidis_et_al:LIPIcs.ITCS.2017.49,Chakrabarti2020quantumalgorithms}, which forms the unitary $U_A$ as the product of the controlled-state preparation unitary $U_R$, the state-preparation unitary $U_L$ and swap gates. In this prescription,
\begin{equation}
    U_A = (U_L^\dagger\otimes I_{n_2})(\text{SWAP}_{n_1,n_2})U_R ,
    \label{eq block encoding state preparation}
\end{equation}
where, controlled by an $n$-qubit register in the state $\ket{j}_{n_2}$, $U_R$ prepares the $n$-qubit state $\ket{A_j}_{n_1}$, and $U_L$ prepares the state $\ket{A_{F}}_{n_2}$ with $n_q$ ancilla qubits. 
\begin{equation}
\begin{aligned}
    &U_R\ket{0}_{n_1}\ket{0}_{n_q}\ket{j}_{n_2} = \ket{A_j}_{n_1} \ket{0}_{n_q}\ket{j}_{n_2},\\
    &U_L\ket{0}_{n_1}\ket{0}_{n_q} = \ket{A_F}_{n_1}\ket{0}_{n_q}.\\
\end{aligned}
\end{equation}
There are different normalization factors in the block-encoding protocol in different state-preparation based on equation~\eqref{eq block encoding state preparation}. First, if we define the states as 
\begin{equation}
\begin{aligned}
    &\ket{A_j}_{n_1} = \sum_{k=0}^{2^n-1}\frac{A_{k,j}}{\Vert A_{\cdot j}\Vert}\ket{k}_{n_1}, \\
    &\ket{A_F}_{n_1} = \sum_{j=0}^{2^n-1}\frac{\Vert A_{\cdot, j}\Vert}{\Vert A\Vert_F}\ket{j}_{n_1},
\end{aligned}
\end{equation}
where $\Vert A_{\cdot, j}\Vert$ is the Euclidean norm of $j$th column of $A$, $\Vert A\Vert_F$ is the Frobenius norm of $A$, then $U_A$ in equation~\eqref{eq block encoding state preparation} is an $(\Vert A\Vert_F,n+n_q)$-block-encoding of $A$. That is,
$$
\bra{0}_{n_1}\bra{0}_{n_q}\bra{k}_{n_2}U_A \ket{0}_{n_1}\ket{0}_{n_q}\ket{j}_{n_2} = \frac{A_{k,j}}{\Vert A\Vert_F}.
$$

Second, if we define $U_R$ and $U_L$ as~\cite{10012045,kerenidis2017} 
\begin{equation}
    \begin{aligned}
    &U_R\ket{0}_{n_1}\ket{0}_{n_q+2}\ket{j}_{n_2} = \ket{A_j^{(p)}}_{n_1 +2}\ket{0}_{n_q}\ket{j}_{n_2},\\ &U_L\ket{0}_{n_1}\ket{0}_{n_q+2}\ket{k}_{n_2} = \ket{\tilde{A}_k^{(p)}}_{n_1 +2}\ket{0}_{n_q}\ket{k}_{n_2},
    \end{aligned}
    \label{eq BITBLE extended}
\end{equation}
for $p\in[0,1]$, where 
$$
\begin{aligned}
&\ket{A_j^{(p)}}_{n_1+2} =\\
&\qquad \sum_{k\in[N]} \frac{e^{i\theta_{k,j}}\vert A_{k,j}\vert^p}{\sqrt{\Vert A_{\cdot,j}\Vert_{2p}^{2p}}}\ket{k}_{n_1} \left[\cos\chi_j\ket{0} + \sin\chi_j\ket{1} \right] \ket{0}, \\
&\ket{\tilde{A}_k^{(p)}}_{n_1+1} = \\
&\qquad \sum_{j\in[N]}\frac{\vert A_{k,j}\vert^{1-p}}{\sqrt{\Vert A_{k,\cdot}\Vert_{2(1-p)}^{2(1-p)}}}\ket{j}_{n_1}\ket{0}\left[\cos\chi_k\ket{0} + \sin\chi_k\ket{1} \right],
\end{aligned}
$$
 $\cos\chi_{j} = \sqrt{\frac{\Vert A_{\cdot,j}\Vert_{2p}^{2p}}{S_{2p}(A^T)}}$, $\cos\chi_k = \sqrt{\frac{\Vert A_{k,\cdot}\Vert_{2(1-p)}^{2(1-p)}}{S_{2(1-p)}(A)}}$, and $S_q(A)=\max_{k}\Vert A_{k,\cdot}\Vert_q^q$ is the $q$th power of the maximum $q$-norm of any row of $A$, then $U_L^\dagger(\text{SWAP}_{n_1,n_2})U_R$ is a $(\mu_p(A^T),n+n_q+2)$-block-encoding of $A$. The normalization factor is $\mu_p(A^T)= \sqrt{S_{2p}(A^T)S_{2(1-p)}(A)}$. In this article, we will give  fast classical computation methods of circuit synthesis of $(\Vert A\Vert_F,n)$-block-encoding and $(\mu_p(A^T),n+2)$-block-encoding.


\subsection{Parameter finding}
\label{subsection Parameter finding}
The block-encoding protocol, named Binary Tree Block-encoding (\texttt{BITBLE}), is based on equation~\eqref{eq block encoding state preparation} using quantum state preparation through multiplexor operations. One of the protocols is a $(\Vert A\Vert_F,n)$-block-encoding of $A$ as shown in Fig.~\ref{fig bitble quantum circuit}, where controlled-$O_A$ corresponds to the controlled state preparation $U_R$ in equation~\eqref{eq block encoding state preparation}, and $O_G^\dagger$ corresponds to the state preparation $U_L^\dagger$.
\begin{figure*}[htbp]
    \begin{subfigure}[htbp]{1.0\textwidth}
    \centering
    \[\begin{myqcircuit*}{1}{1.5}
		&&& U_{R} && U_{L}^{\dagger} \\
		& \lstick{\ket{0}^{\otimes n}} & {/} \qw & \gate{O_{A}} & \qswap & \gate{O_{G}^{\dagger}} & \meter \\
		& \lstick{n\mbox{-qubit state } \ket{\psi}} & {/} \qw & \ctrlsq{-1} & \qswap \qwx[-1] & \qw & \rstick{\frac{A\ket{\psi}}{\left\|\ket{A}\right\|}} 
		\gategroup{2}{4}{3}{4}{.7em}{--}
		\gategroup{2}{6}{2}{6}{.7em}{--}
	\end{myqcircuit*}\]
    \caption{Circuit frameworks of \texttt{BITBLE}.}
    \end{subfigure}
    \\[\baselineskip]
    \begin{subfigure}[htbp]{1.0\textwidth}
    \centering
    \[\resizebox{0.7\textwidth}{!}{
	$
	\begin{myqcircuit*}{2.25}{0.75}
		&& \qw & \multigate{2}{O_{A}} & \qw \\
		& \lstick{n-2\mbox{ qubits}} & {/} \qw & \ghost{O_{A}} & \qw \\
		&& \qw & \ghost{O_{A}} & \qw \\
		& \lstick{n\mbox{-qubit state} \ket{\psi}} & {/} \qw & \ctrlsq{-1} & \qw \\
	\end{myqcircuit*} 
	 = 
	\begin{myqcircuit}
		& \qw & \gate{R_{Z}^{\left[\hat{\theta}_{-1,j}\right]_{j=0}^{2^n-1}}} & \gate{R_{Y}^{\left[\hat{\varphi}_{0,j}\right]_{j=0}^{2^n-1}}} & \qw & {\cdots} && \ctrlsq{1} & \gate{R_{Z}^{\left[\hat{\theta}_{0,j}\right]_{j=0}^{2^n-1}}} & \qw & {\cdots} && \ctrlsq{1} & \qw \\
		& {/} \qw & \qw & \qw & \qw & {\cdots} && \ctrlsq{1} & \qw & \qw & {\cdots} && \ctrlsq{1} & \qw \\
		& \qw & \qw & \qw & \qw & {\cdots} && \gate{R_{Y}^{\left[\hat{\varphi}_{n-1,j}\right]_{j=0}^{2^{2n-1}-1}}} & \qw & \qw & {\cdots} && \gate{R_{Z}^{\left[\hat{\theta}_{n-1,j}\right]_{j=0}^{2^{2n-1}-1}}} & \qw \\
		& {/} \qw & \ctrlsq{-3} & \ctrlsq{-3} & \qw & {\cdots} && \ctrlsq{-1} & \ctrlsq{-3} & \qw & {\cdots} && \ctrlsq{-1} & \qw \\
	\end{myqcircuit}
	$}
	\]
    \caption{State preparation unitary $U_R$.}
    \end{subfigure}
    \\[\baselineskip]
    \begin{subfigure}[htbp]{1.0\textwidth}
    \centering
    \[\resizebox{0.35\textwidth}{!}{$
    \begin{myqcircuit*}{2}{0.75}
		&& \qw & \multigate{2}{O_{G}^{\dagger}} & \qw \\
		&\lstick{n-2\mbox{ qubits}} & {/} \qw & \ghost{O_{G}^{\dagger}} & \qw \\
		&& \qw & \ghost{O_{G}^{\dagger}} & \qw \\
	\end{myqcircuit*} 
     = 
	\begin{myqcircuit}
		& \qw & \ctrlsq{1} & \qw & {\cdots} && \gate{R_{Y}^{-\hat{\varphi}'_{0,0}}} & \qw \\
		& {/} \qw & \ctrlsq{1} & \qw & {\cdots} && \qw & \qw \\
		& \qw & \gate{R_{Y}^{-\left[\hat{\varphi}'_{n-1,j}\right]_{j=0}^{2^{2n-1}-1}}} & \qw & {\cdots} && \qw & \qw \\
	\end{myqcircuit}
	$}\]
    \caption{Inverse state preparation unitary $U_L^\dagger$}
    \end{subfigure}
\caption{Quantum circuits of Binary Tree Block Encoding (\texttt{BITBLE}) with normalization factor $\Vert A\Vert_F$.}
\label{fig bitble quantum circuit}
\end{figure*} 
The complete parameters' computation follows these three processes:\\

 \textbf{Process 1. From classcial data to multiplexor operations' parameters.}
 
 The complete algorithms of computing the \texttt{BITBLE} protocol's rotation parameters $\{\varphi_{\cdot}\}$, $\{\varphi_{\cdot}'\}$, and $\{\theta_{\cdot}\}$ is provided in Algorithm~\ref{alg parameter bitble} in Appendix~\ref{appendix bitble}. The rotation parameters $\{\hat\varphi_\cdot\}$ and $\{\hat\theta_\cdot\}$ are the permutation of $\{\varphi_\cdot\}$ and $\{\theta_\cdot\}$, that is, $[\hat\theta_{-1,j}]_{j=0}^{2^n-1} \equiv [\theta_{-1,j}]_{j=0}^{2^n-1}$ and
 \begin{equation}
    \begin{aligned}
    \left[\hat\varphi_{k,j'}\right]_{j'=0}^{2^{n+k}-1} & \equiv \left[[\varphi_{2^{k}-1,j}]_{j=0}^{2^n-1},\ldots,[\varphi_{2^{k+1}-2,j}]_{j=0}^{2^n-1}\right],\\
    \left[\hat\theta_{k,j'}\right]_{j'=0}^{2^{n+k}-1} & \equiv \left[[\theta_{2^{k}-1,j}]_{j=0}^{2^n-1},\ldots, [\theta_{2^{k+1}-2,j}]_{j=0}^{2^n-1}\right],\\
    \end{aligned}
    \label{eq otation parameters}
\end{equation}
for all $0\leq k\leq n-1$.

 \textbf{Process 2. From multiplexor operations' parameters to single-qubit rotations' parameters.}

 There are two decoupling methods of multiplexor operations in \texttt{BITBLE} protocol: permutative demultiplexor and recursive demultiplexor. 

1) Decoupling by permutative demultiplexor in Section~\ref{subsecion multiplexor operations decomposition}. The single-qubit rotation parameters can be computed as
    \begin{equation}
        \begin{aligned}
            & \left([\tilde{\beta}_{-1,j}]_{j=0}^{2^n-1}\right)^T = \left(M^{n}\right)^{-1}\left([ \hat{\beta}_{-1,j} ]_{j=0}^{2^n-1}\right)^T, \\
            & \left([\tilde{\beta}_{k,j}]_{j=0}^{2^{n+k}-1}\right)^T = \left(M^{n+k}\right)^{-1}\left([ \hat{\beta}_{k,j} ]_{j=0}^{2^{n+k}-1}\right)^T,
        \end{aligned}
        \label{eq decomposition for multiplexor operation type1}
    \end{equation}
    for all $0\leq k\leq n-1$, where $\hat\beta_{-1,j}$ and $\hat\beta_{k,j}$, for $\hat\beta_{\cdot} = \hat\varphi_{\cdot}$ in the multiplexor operations-$Y$ (or $\hat\beta_{\cdot} = \hat\theta_{\cdot}$ in the multiplexor operations-$Z$). The solving process can be accelerated by the fast Walsh-Hadamard transformation~\cite{9951292,1674569}. 
    
2) Decoupling by recursive demultiplexor described in Section~\ref{subsection Recursive demultiplexor}. The single-qubit rotation parameters $\begin{bmatrix}
        [\tilde\beta_{k,j}^{(2)}]_{j=0}^{2^n-1} 
    \end{bmatrix}_{k=0}^{2^n-2}\in\mathbb{C}^{(2^n-1)\times 2^n}$ (abbreviated as $[\tilde{\beta}_{k,j}^{(2)}]$) can be computed as  
\begin{equation}
    [\tilde{\beta}_{k,j}^{(2)}] = \left(
        \left(M^n\right)^{-1}
        \begin{bmatrix}
            [\beta_{0,j}]_{j=0}^{2^n-1}  \\
            \left(M^1\right)^{-1} 
            \begin{bmatrix}
                [\beta_{1,j}]_{j=0}^{2^n-1} \\
                [\beta_{2,j}]_{j=0}^{2^n-1}
            \end{bmatrix}  \\
            \vdots \\
            \left(M^{n-1}\right)^{-1} 
            \begin{bmatrix}
                [\beta_{2^{n-1}-1,j}]_{j=0}^{2^n-1} \\
                \vdots \\
                [\beta_{2^{n}-2,j}]_{j=0}^{2^n-1}
            \end{bmatrix}
        \end{bmatrix}^T
    \right)^T,
    \label{eq decomposition for multiplexor operation type2}
\end{equation}
where the rotation parameters in the first bracket $(\cdot)$ correspond to the first recursion, and those in the second bracket $[\cdot]$ correspond to the second recursion. The process is parallelizable and can theoretically achieve up to a quadratic speedup over permutative demultiplexor when sufficient processing resources are available.


Actually, it decouples a large multiplexor operation into multiple multiplexor operations, this method requires $(2^{n-1} - 2)$ additional CNOT gates for encoding a $2^n \times 2^n$ matrix compared to the previous decoupling method using permutative demultiplexor.

 \textbf{Process 3. Circuit compression.}
 
 The circuit compression is inherited from~\cite{9951292}. Assume that the classical data obtained after the aforementioned two processes takes the form $\{\tilde{\beta}_{l,k}\}$. For elements satisfying $\vert\tilde{\beta}_{l,k}\vert \leq \delta$, where $\delta$ is a predefined threshold, these can be deemed negligible, and the single-qubit gate can be removed. In addition, the consecutive CNOT gates can be canceled out using a parity check.

 
\subsection{Time complexity and space complexity of parameter finding}

\begin{theorem}[Time of single-qubit gates' parameters computation in \texttt{BITBLE} protocol]
    The time complexity to calculate the single-qubit gates' parameters with \texttt{BITBLE} protocol of a $2^n\times 2^n$ matrix is $\mathcal{O}(n2^{2n})$. 
    \label{lemma BITBLE permutative multiplexor operation}
\end{theorem}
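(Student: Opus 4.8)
The plan is to split the parameter computation into the three processes of Section~\ref{subsection Parameter finding} and show that the total is governed by Process~2, which runs in $\mathcal{O}(n2^{2n})$ time, while Processes~1 and~3 cost only $\mathcal{O}(2^{2n})$. First I would bound Process~1. The controlled state preparation $U_R$ must encode all $2^n$ normalized columns $\ket{A_j}$ of $A$, whereas $U_L$ encodes the single state $\ket{A_F}$. By Theorem~\ref{thm the time complextiy of computing rotation time}, the rotation-$Y$ and rotation-$Z$ binary-tree parameters of a single $n$-qubit state are obtained in $\Theta(2^n)$ time using the multiplicative relation~\eqref{eq multiple relation}, the additive relation~\eqref{eq additive relation}, and the recursive inverse~\eqref{eq inverse Matrix in RZ}. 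Running this column-by-column yields $\Theta(2^{2n})$ for $U_R$ and $\Theta(2^n)$ for $U_L$, so Process~1 costs $\Theta(2^{2n})$.

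Next I would analyze Process~2, the decoupling of each multiplexor operation into single-qubit rotations, which I expect to be the bottleneck. From~\eqref{eq otation parameters}, the $k$-th rotation layer of the controlled preparation is a multiplexor with $2^{n+k}$ parameters for $0\le k\le n-1$. Using the permutative demultiplexor~\eqref{eq decomposition for multiplexor operation type1}, recovering its single-qubit parameters amounts to solving the Walsh--Hadamard linear system $M^{n+k}$ of dimension $2^{n+k}$, which by Lemma~\ref{lemma time of uniformly controlled gates} takes $\mathcal{O}\bigl((n+k)2^{n+k}\bigr)$. Summing over all layers (and over both rotation types, a constant factor),
\begin{equation}
\sum_{k=0}^{n-1}(n+k)2^{n+k} = \Theta\!\left(n2^{2n}\right),
\end{equation}
because the top term $(2n-1)2^{2n-1}$ dominates this geometric-type series. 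Hence Process~2 costs $\mathcal{O}(n2^{2n})$, and the recursive demultiplexor~\eqref{eq decomposition for multiplexor operation type2} attains the same bound by Lemma~\ref{lemma time of permutative multiplexor operations}.

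Finally, Process~3 (circuit compression) sweeps once through the $\Theta(2^{2n})$ single-qubit parameters, discarding those of magnitude at most $\delta$ and canceling adjacent CNOTs via parity checks, at cost $\mathcal{O}(2^{2n})$. Adding the three contributions gives $\mathcal{O}(2^{2n})+\mathcal{O}(n2^{2n})+\mathcal{O}(2^{2n}) = \mathcal{O}(n2^{2n})$, as claimed.

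The main obstacle is the bookkeeping of the geometrically growing sizes of the multiplexor operations across the $n$ layers of the controlled preparation — from $2^n$ up to $2^{2n-1}$ — and verifying that the associated sum of fast Walsh--Hadamard costs collapses to $\Theta(n2^{2n})$ rather than something larger. This rests on the fact that the largest system dominates the sum, so that the logarithmic factor in Lemma~\ref{lemma time of uniformly controlled gates} contributes only the single factor $n$ rather than accumulating across layers.
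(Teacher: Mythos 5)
Your proposal is correct and follows essentially the same route as the paper's proof: the same three-way split into multiplexor-parameter generation via Theorem~\ref{thm the time complextiy of computing rotation time} ($\Theta(2^{2n})$), layer-by-layer decoupling dominated by the sum $\sum_{k=0}^{n-1}(n+k)2^{n+k}=\mathcal{O}(n2^{2n})$ via Lemma~\ref{lemma time of uniformly controlled gates} and Lemma~\ref{lemma time of permutative multiplexor operations}, and a linear $\Theta(2^{2n})$ compression sweep. Your only departures are cosmetic — you make the per-column application of Theorem~\ref{thm the time complextiy of computing rotation time} and the separate $\Theta(2^n)$ cost of $U_L$ explicit, and you invoke Lemma~\ref{lemma time of permutative multiplexor operations} for the recursive demultiplexor rather than writing out the paper's second sum $\sum_{k} k\,2^{k}2^{n}$ — which, if anything, sidesteps the paper's own swapped attribution of the two decoupling methods to their respective lemmas and equations.
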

\begin{proof}
    We calculate the time complexity step by step:
    \begin{itemize}
        \item Multiplexor operations' parameters: By Theorem~\ref{thm the time complextiy of computing rotation time}, $2^n\times2^n$ multiplexor operations' parameters can be computed in time $\Theta(2^{2n})$. Therefore, the multiplexor operations' parameters in the \texttt{BITBLE} protocol can be generated in time $\Theta(2^{2n})$.
        \item Decoupling multiplexor operations' parameters: On one hand, it takes a time of $\sum_{k=0}^{n-1}(k + n)2^{n+k} = \mathcal{O}(n2^{2n})$ to decouple the recursive demultiplexor by Lemma~\ref{lemma time of uniformly controlled gates} and equation~\eqref{eq decomposition for multiplexor operation type1}; On the other hand, it takes  $\sum_{k=0}^{n}k\times2^{k}\times2^{n} =  \mathcal{O}(n2^{2n})$ time to decouple the permutative demultiplexor by Lemma~\ref{lemma time of permutative multiplexor operations} and equation~\eqref{eq decomposition for multiplexor operation type2}.  
        \item Circuit compression: It takes a time of $\Theta(2^{2n})$ to compress $\Theta(2^{2n})$ parameters and parity-check at most $\Theta(2^{2n})$ on secutive CNOT gates.
    \end{itemize}

    Above all, it takes $\mathcal{O}(n2^{2n})$ time to calculate all single-qubit parameters in \texttt{BITBLE} protocol.
\end{proof}

 Since the memory of the parameters $\beta_\cdot$ can be overwritten by  $\tilde{\beta}_{\cdot}$ in the process of computing, the space complexity of the multiplexor operation decomposition by equation~\eqref{eq decomposition for multiplexor operation type1} and equation~\eqref{eq decomposition for multiplexor operation type2} is both $\Theta(2^{2n})$, that is an optimal space complexity of parameter finding for encoding a $2^n\times 2^n$ size matrix.


\section{Numerical experiments}
\label{section Examples}

 In this section, some numerical experiments will be provided to demonstrate the efficiency and simulation performance of our block-encoding protocols. We put another block-encoding protocol -- Fast Approximate Quantum Circuits (\texttt{FABLE}~\cite{9951292}) as a benchmark, the time complexity, memory complexity and normalization factor of these protocols are shown in Table~\ref{tab: bitble_siable_fable}. \texttt{BITBLE} protocols have a lower normalization factor compared to \texttt{FABLE} circuits, since $\max\{\Vert A\Vert_F,\mu_p(A)\}\leq 2^n\max_{i,j}\vert A_{i,j}\vert$ for $p\in[0,1]$. The time complexity, memory complexity of computing the rotation parameters, and the normalization factor for these two block-encoding protocols are shown in Table~\ref{tab: bitble_siable_fable}. 
 \begin{table}[htbp]
 \centering
    \begin{tabular}{l | cccc}
    \hline \cline{1-4}
        Protocols & Time & Memory & Normalization factor  \\ \hline \cline{1-4}
        \texttt{BITBLE} & $\mathcal{O}(n2^{2n})$  & $\Theta(2^{2n})$ & $\Vert A \Vert_F$ or $\mu_p(A)$  \\ \hline
        \texttt{FABLE}~\cite{9951292} & $\mathcal{O}(n2^{2n})$ & $\Theta(2^{2n})$  & $2^n\max_{i,j}\vert A_{i,j}\vert$ \\  \hline \cline{1-4}
    \end{tabular}
 \caption{The time and memory complexity of computing parameters, as well as the normalization factor of two fast block-encoding protocols.}
 \label{tab: bitble_siable_fable}
 \end{table}

The experiments explore different block-encoding circuits with different decoupling methods and normalization factors. The normalization factor, decoupling methods and ancillary qubits of \texttt{BITBLE}\textsuperscript{1}, \texttt{BITBLE}\textsuperscript{2} and \texttt{BITBLE}\textsuperscript{3} protocols are shown in Table~\ref{tab: Maximum rotation parameters}. In the following experiments, \texttt{BITBLE}\textsuperscript{3} protocols set $p = 0.5.$

 \begin{table}[htbp]
    \centering
    \begin{tabular}{l|clc}
        \hline \cline{1-4}
        Protocols & \makecell{Normalization\\ Factor} & \makecell{Decoupling Methods} & Ancilla  \\ \hline \cline{1-4}
        \texttt{BITBLE}$^{1}$ & $\Vert A\Vert_F$ & \makecell{Recursive  demultiplexor} & $n$  \\ \hline
        \texttt{BITBLE}$^{2}$ & $\Vert A\Vert_F$ &  \makecell{Permutative demultiplexor} & $n$  \\ \hline
        \texttt{BITBLE}$^{3}$ & $\mu_p(A)$ & \makecell{Recursive  demultiplexor} & $n+2$ \\ \hline
    \end{tabular}
    \caption{Normalization factor, decoupling methods and ancilla qubits in block-encoding protocols of size $A\in\mathbb{C}^{2^n\times 2^n}$.}
    \label{tab: Maximum rotation parameters}
\end{table}


\subsection{Classical Circuit Synthesis Time for Block-Encoding of Random Matrices}

 Numerical experiments for encoding random matrices are presented in Table~\ref{tab: time of computing single-qubit rotations} and Fig.~\ref{fig:time of random matrix}. The experiments compare the decoupling time of random matrices using three methods:  
\begin{itemize}  
    \item \texttt{BITBLE} (our proposed approach),  
    \item \texttt{FABLE}, and  
    \item Qiskit's unitary synthesis~\cite{javadiabhari2024quantumcomputingqiskit} (which encodes $2^{n} \times 2^{n}$ random matrices into the top-left block of a random $2^{n+1} \times 2^{n+1}$ unitary).  
\end{itemize}  
Although decoupling multiplexor operations can be parallelized using permutative demultiplexor, our experiments were conducted in serial computing. Even under serial computation, permutative demultiplexor demonstrate significant advantages. Among these, \texttt{BITBLE}\textsuperscript{1} achieves the fastest decoupling of multiplexor operations and computation of single-qubit rotation parameters.

 \begin{table}[htbp]
    \centering
    \begin{tabular}{l | ccccc}
    \hline \cline{1-6}
        $n$ & \texttt{BITBLE}$^{1}$  & \texttt{BITBLE}$^{2}$ & \texttt{BITBLE}$^{3}$ & \texttt{FABLE}~\cite{9951292} & Qiskit \\ \hline \cline{1-6}
        5 & 0.017 & 0.010 & 0.014 & \textbf{0.008} & 6.97 \\ \hline
        6 & 0.031 & 0.035 & 0.054 & \textbf{0.028} & 46.3 \\ \hline
        7 & \textbf{0.104} & 0.113 & 0.211 & 0.112 & 221.2 \\ \hline
        8 & 0.642 & \textbf{0.464} & 0.872 & 0.484 & 669.7 \\  \hline 
        9 & \textbf{1.683} & 1.980 & 3.637 & 1.984 & 767.4 \\  \hline 
        10 & \textbf{7.137} & 8.368 & 15.21 & 8.604 & 7262 \\ \hline
        11 & \textbf{29.35} & 35.20 & 65.04 & 36.49 & - \\  \hline 
        12 & \textbf{120.4} & 147.1 & 269.5 & 148.7 & - \\ \hline
        13 & \textbf{498.6} & 623.6 & 1125 & 625.5 & -\\ \hline
        14 & \textbf{2103} & 3119 & 5342 & 2831 & - \\ \hline
        \cline{1-6}
    \end{tabular}
     \caption{Time of classical circuit synthesis for random matrices of size $2^n \times 2^n$ (corresponding to $n$ qubits). \texttt{BITBLE}\textsuperscript{1} demonstrates significantly faster parameter computation compared to both \texttt{FABLE} and Qiskit's unitary synthesis. Notably, Qiskit's unitary synthesis encounters memory limitations when processing random matrices larger than $2^{10} \times 2^{10}$.}
     \label{tab: time of computing single-qubit rotations}
    \end{table}
 
  \begin{figure}[htbp]
    \centering
    \includegraphics[width=1.0\linewidth]{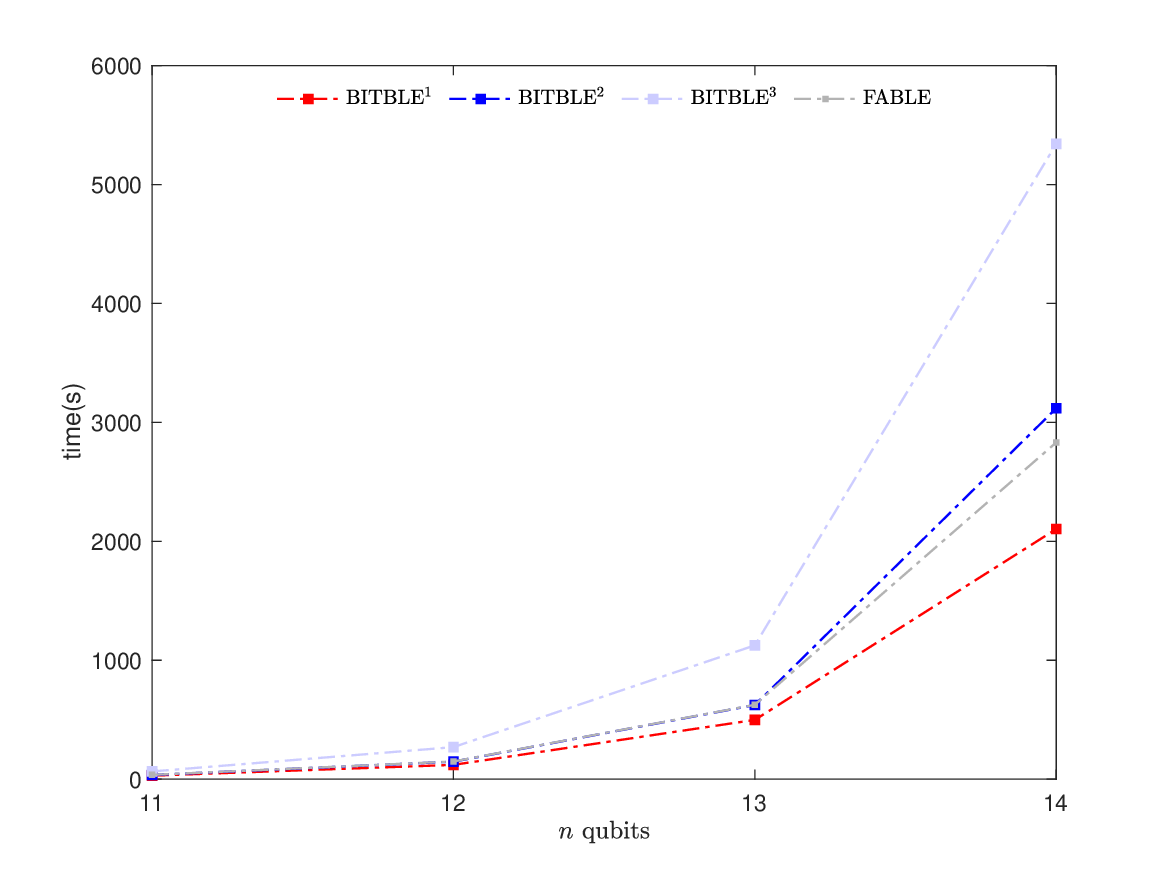}
    \caption{Time of classical circuit synthesis for $n$ qubits' random matrices, less time are preferable.}
    \label{fig:time of random matrix}
 \end{figure}


 \subsection{Size metric of block-encoding}
 
To further evaluate the performance of different block-encoding protocols, we introduce the \textit{size metric} for single-qubit rotation and two-qubit CNOT gates, defined as
\begin{equation}
    \textit{size\ metric} = \textit{number\ of\ gates} \times \textit{normalization\ factor}.
\end{equation}
 A special case of the above figure of metric, in terms of $T$-gate counts,  has been proposed by~\cite{Sunderhauf2024blockencoding}. Note that this metric is proportional to the normalization factor. In the remainder of this section, we present the CNOT and single-qubit rotation \textit{size metric} for \texttt{BITBLE} protocols applied to two model problems: image channels and discretized Laplacian operators in $1D$ and $2D$. Both simulations highlight the advantages of \texttt{BITBLE}.

\textbf{1) Encoding single-channel image.}

   The ``Peppers.png'' image and the FASHION MNIST dataset~\cite{zalandoresearch_fashionmnist} are used to evaluate the performance of different simulation-friendly block-encoding protocols. Additionally, we consider the unitary synthesis encoding $A$ into unitary
    $$
    \begin{bmatrix}
        A & \sqrt{I - AA^*} \\
        -\sqrt{I - A^*A} & A^*
    \end{bmatrix}
    $$
    implemented in Qiskit~\cite{javadiabhari2024quantumcomputingqiskit}.
    
    We compare the CNOT \textit{size metric} and parameter computation time across different methods, including the \texttt{FABLE} protocol (implemented via the QPIXL framework~\cite{Amankwah2022}) and Qiskit's unitary synthesis. We examine two distinct cases:
    \begin{enumerate}
        \item Encoding a ``Peppers.png'' color image, where each channel (Red, Green, and Blue) is encoded independently. The CNOT gate \textit{size metric} and computation time for this image are presented in Fig.~\ref{fig:peppers}.
        
        \item Encoding grayscale images from the ``FASHION MNIST'' dataset. We evaluate the performance for composite images with sizes ranging from $10 \times 10$ to $80 \times 80$ images, each image has $28\times 28$ pixels, as shown in Fig.~\ref{fig:fashion_mnist}.
    \end{enumerate}
    
    \begin{figure}
    \begin{subfigure}[htbp]{0.49\textwidth}
        \centering
        \includegraphics[width=1.0\linewidth]{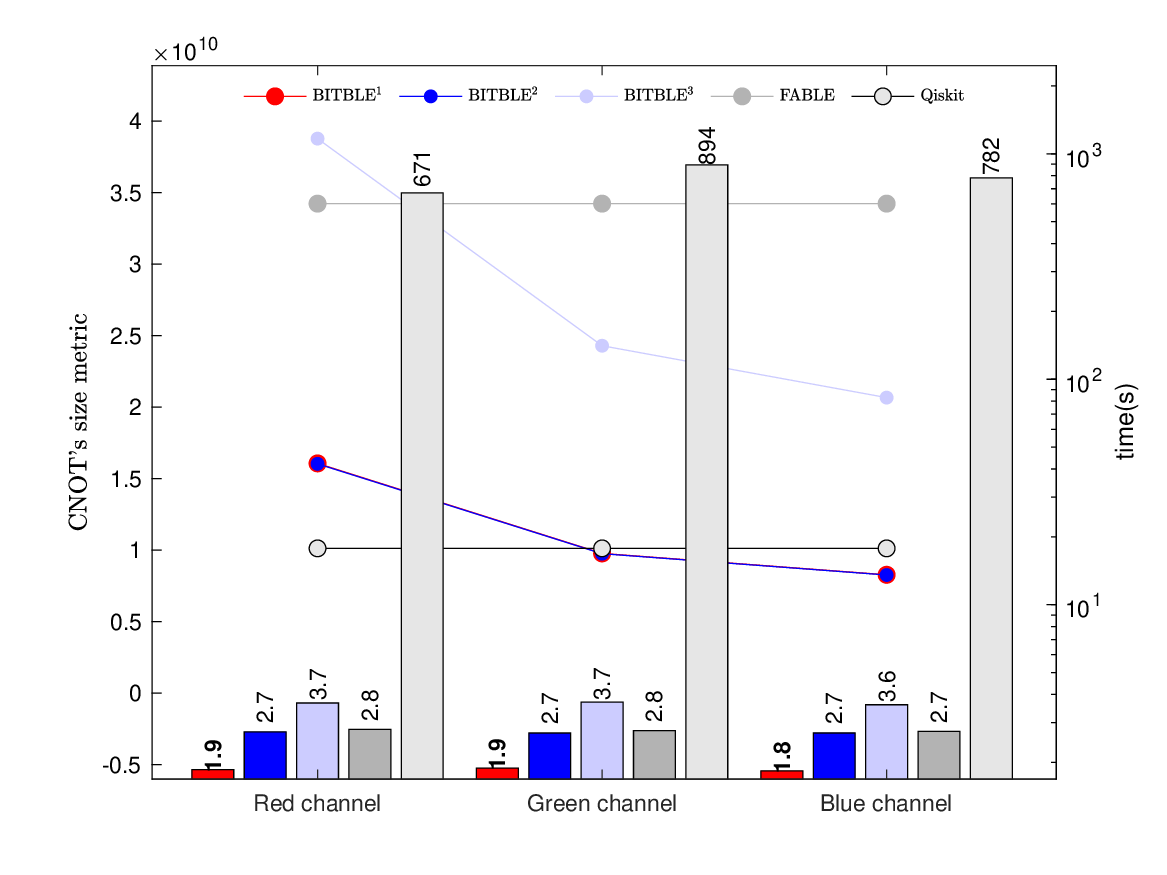}
        \caption{The lines show the \textit{size metric} of CNOT gates on the left $y$-axis and the bars show the time of circuit synthesis on the right $y$-axis. Lower values for both the size metric and time are preferable.}
    \end{subfigure}
    \hfill
    \begin{subfigure}[htbp]{0.49\textwidth}\centering
        \includegraphics[width=1.0\linewidth]{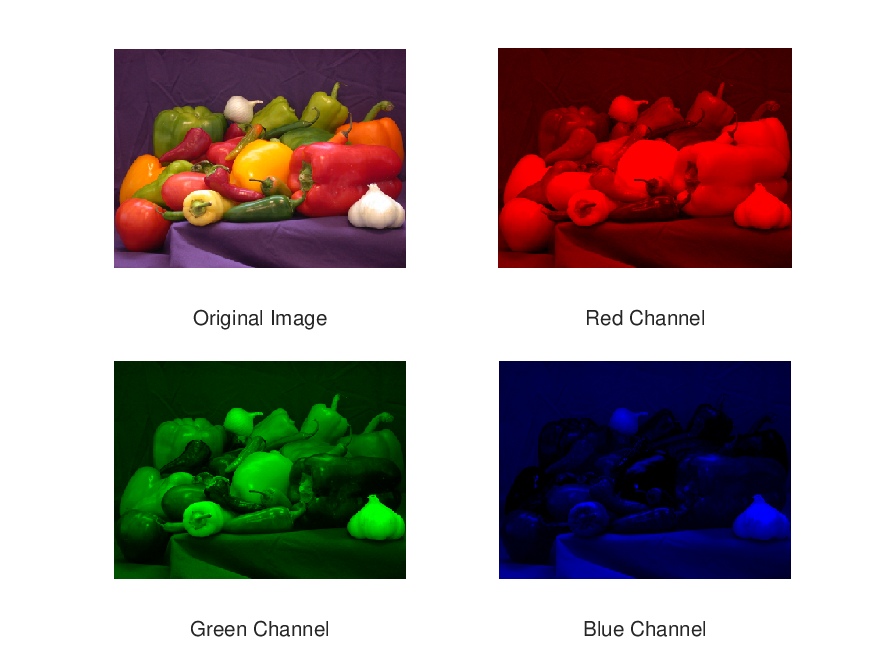}
        \caption{RGB channels of ``Peppers.png''.}
    \end{subfigure}
    \caption{The CNOT's \textit{size metric} and time of computing parameters of RGB channels of ``Peppers.png''.}
    \label{fig:peppers}
    \end{figure}
    \begin{figure}
    \begin{subfigure}[htbp]{0.49\textwidth}
        \centering
        \includegraphics[width=1.0\linewidth]{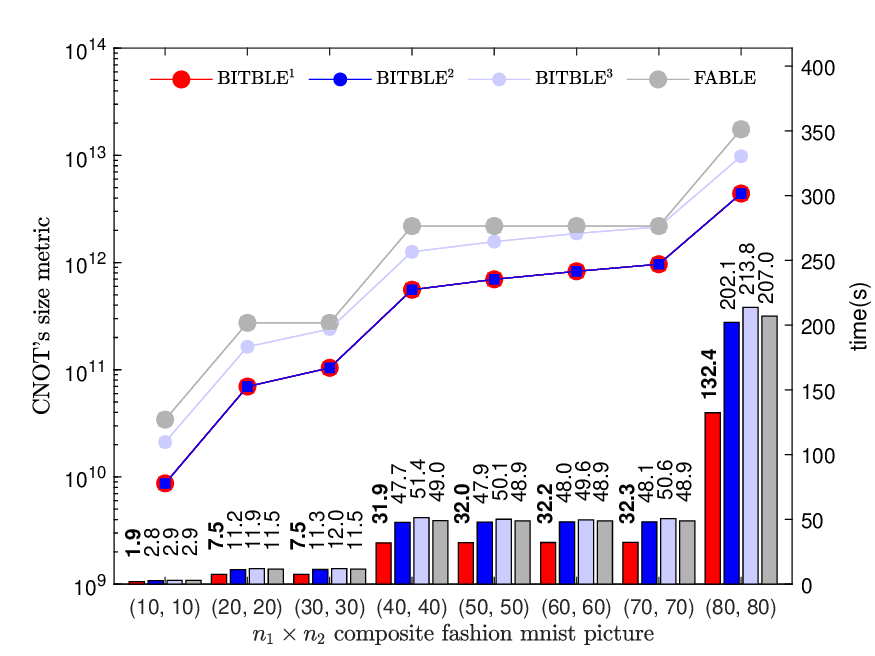}
        \caption{The lines show the \textit{size metric} of CNOT gates on the left $y$-axis and the bars show the time of circuit synthesis on the right $y$-axis. Lower values for both the size metric and time are preferable.}
    \end{subfigure}
    \hfill
    \begin{subfigure}[htbp]{0.49\textwidth}\centering
        \includegraphics[width=1.0\linewidth]{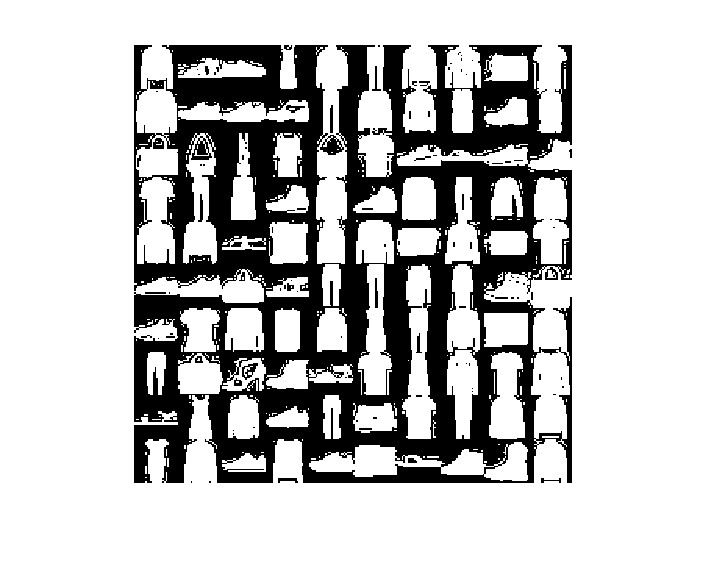}
        \caption{$10\times 10$ composite FASHION MNIST images.}
    \end{subfigure}
    \caption{The CNOT's \textit{size metric} and time of computing parameters of composite ``FASHION MNIST'' images.}
    \label{fig:fashion_mnist}
    \end{figure}
    However, Qiskit's unitary synthesis encounters memory limitations for unitary larger than $1028 \times 1028$ on a 32\,GB RAM computer, and we have not shown the data of Qiskit for encoding ``FASHION MNIST'' images. 
    
    For the ``Peppers.png'' color image encoding, Qiskit's unitary synthesis achieves a lower CNOT \textit{size metric} for the \textit{Red} channel but incurs a $300\times$ longer computation time compared to \texttt{BITBLE} protocols. In contrast, \texttt{BITBLE}\textsuperscript{1} outperforms Qiskit's unitary and \texttt{FABLE} protocol in both time of circuit synthesis and size metric for the \textit{Green} and \textit{Blue} channels.

    For the composite ``Fashion MNIST'' dataset, with pixel sizes ranging from $280 \times 280$ to $2240 \times 2240$ (padding to $512\times 512$ and $4096\times 4096$ matrices), the \texttt{BITBLE}\textsuperscript{1} protocol demonstrates significant advantages over \texttt{FABLE}: it requires $35\%$ less runtime and reduces the CNOT \textit{size metric} by nearly $75\%$. For large-scale datasets, \texttt{BITBLE} supports parallel computation through recursive demultiplexor compared to \texttt{FABLE}. Although this serial implementation does not fully realize the potential speedup, it clearly validates the efficacy of the decoupling approach for block encoding.

\textbf{2) Elliptic partial differential equations.}

    The $1D$ discretized Laplace operator is also considered, which follows the mathematical form:
    $$
    L = 
    \begin{bmatrix}
        2 & -1 & 0 & \cdots & * \\
        -1 & 2 & -1 & \ddots & \vdots \\
        0 & \ddots & \ddots & \ddots & 0 \\
        \vdots & \ddots & -1 & 2 & -1 \\
        * & \cdots & 0 & -1 & 2
    \end{bmatrix},
    $$
    where the $*$ entries in the matrix are both set to $0$ for non-periodic boundary conditions or both set to $-1$ for periodic boundary conditions. In 2D, the discretized Laplace operator becomes the Kronecker sum of discretizations along the $x$- and $y$-directions, a $(n_x, n_y)$-qubit $2D$ discretized Laplace operator~\cite{9951292,osti_1609315} can be defined as
    $$
    L = L_{xx} \oplus L_{yy} = L_{xx} \otimes I_{n_y} + I_{n_x} \otimes L_{yy},
    $$
    where $L_{xx}$ and $L_{yy}$ are $n_x$, $n_y$-qubit operator respectively. The \textit{size metric} of CNOT and single-qubit rotation gates for $1D$ and $2D$ Laplace operators with non-periodic and periodic boundary conditions is presented in Fig.~\ref{fig Elliptic partial differential equations Laplacian}.

\begin{figure}[htbp]
    \centering
    \begin{subfigure}[htbp]{0.23\textwidth}
    \includegraphics[width=1.0\linewidth]{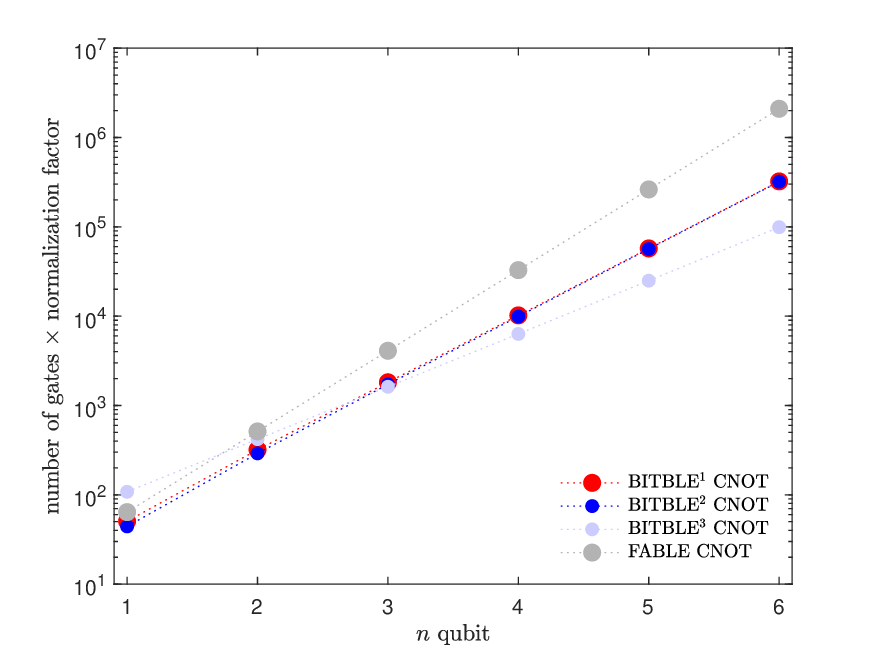}
    \caption{$1D$ non-periodic boundary condition Laplacian}
    \label{fig: BTBLE_FABLE_for_1D_Non_periodic_BC_model}
    \end{subfigure}
    \hfill
    \begin{subfigure}[htbp]{0.23\textwidth}
    \includegraphics[width=1.0\linewidth]{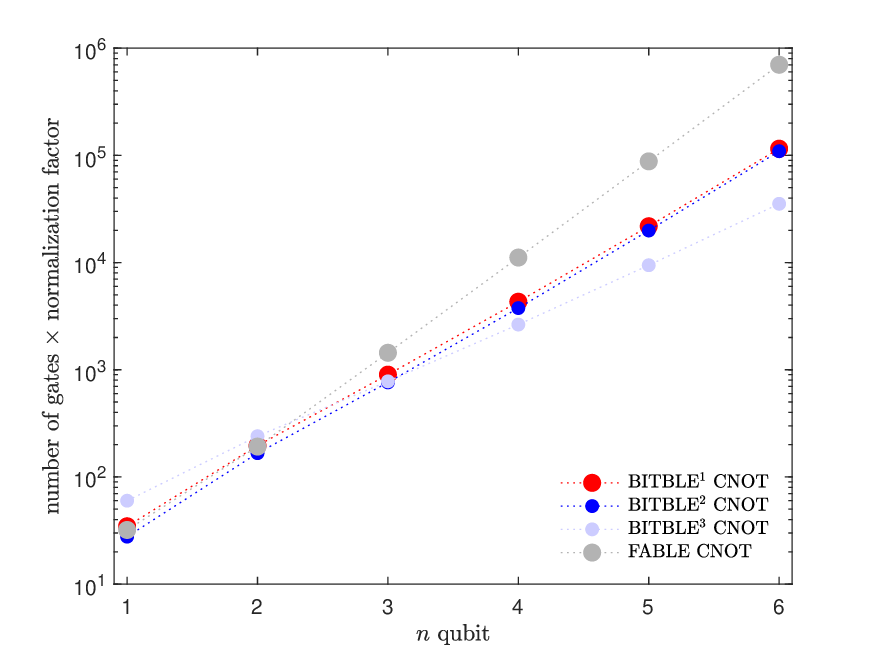}
    \caption{$1D$ periodic boundary condition Laplacian }
    \label{fig: BTBLE_FABLE_for_1D_Periodic_BC_model}
    \end{subfigure}
    \hfill
    \begin{subfigure}[htbp]{0.23\textwidth}
    \includegraphics[width=1.0\linewidth]{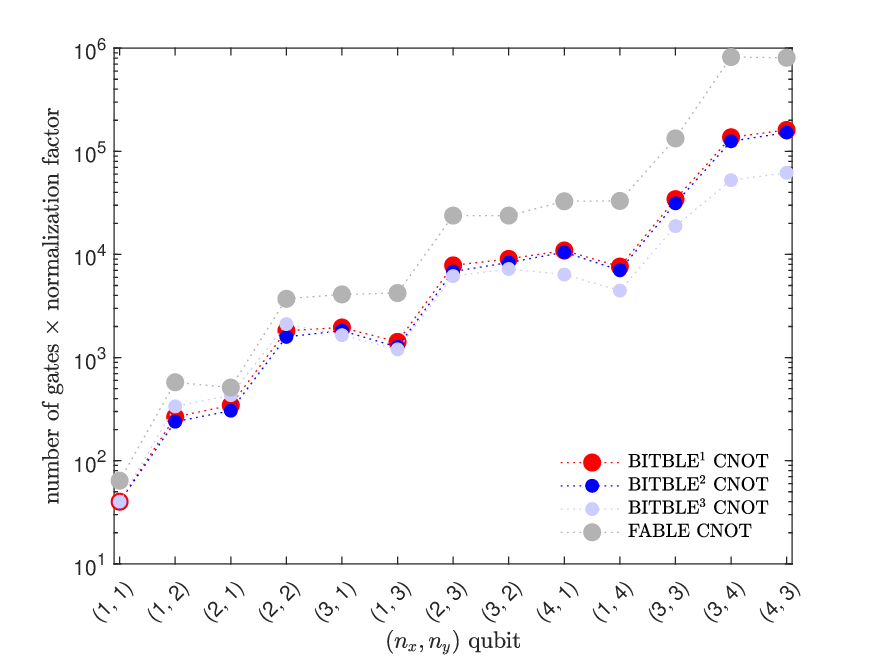}
    \caption{$2D$ non-periodic boundary condition Laplacian}
    \label{fig: BTBLE_FABLE_for_2D_Non_periodic_BC_model}
    \end{subfigure}
    \hfill
    \begin{subfigure}[htbp]{0.23\textwidth}
    \includegraphics[width=1.0\linewidth]{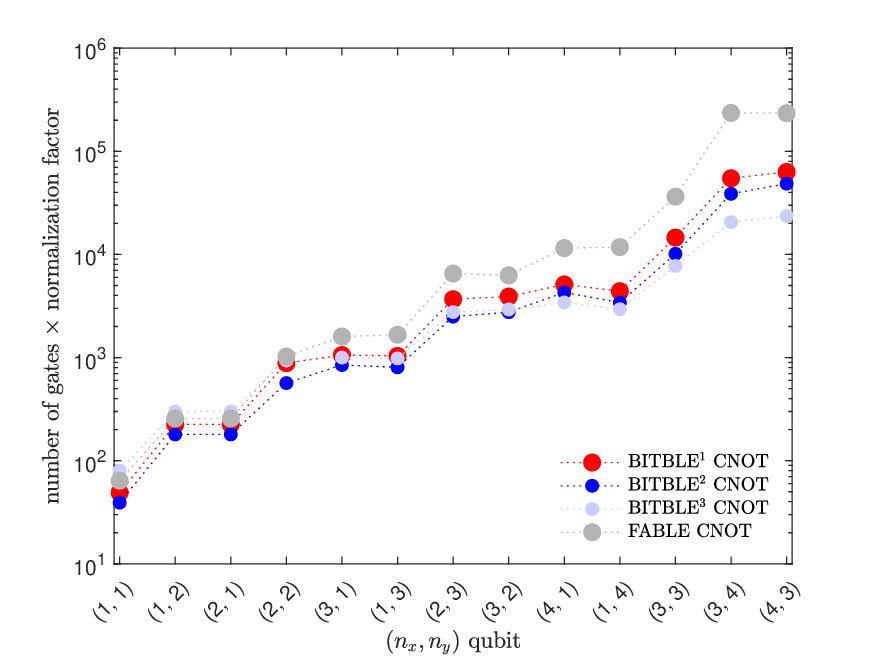}
    \caption{$2D$ periodic boundary condition Laplacian}
    \label{fig: BTBLE_FABLE_for_2D_Periodic_BC_model}
    \end{subfigure}
\caption{The \textit{size metric} of CNOT gates and the rotation-$R_y$ in block-encoding protocols applied to the Laplacian in $1D$ and $2D$ elliptic partial differential equations with non-periodic and periodic boundary conditions. Lower CNOT \textit{size metric} is preferable.}
\label{fig Elliptic partial differential equations Laplacian}
\end{figure}

    The Binary Tree Block-Encoding protocol using normalization factor $\mu_p(A)$ (\texttt{BITBLE}\textsuperscript{3}) exhibits a smaller CNOT \textit{size metric} to encode $1D$ and $2D$ discretized Laplace operators compared to \texttt{BITBLE}\textsuperscript{1}, \texttt{BITBLE}\textsuperscript{2} and \texttt{FABLE}. Specifically, \texttt{BITBLE}\textsuperscript{3} protocol shows more than 90\% reduction in the \textit{size metric} in terms of CNOT gates compared to \texttt{FABLE}~\cite{9951292} in these two systems with more than 4 qubits.

    Based on the above experiments, the \texttt{BITBLE} protocol achieves a better time of circuit synthesis and better \textit{size metric} of CNOT gates compared to the \texttt{FABLE} protocol and Qiskit's unitary synthesis for encoding high-dimensional structured matrices or natural images in most cases.

\subsection{Compression performance}
To compress the circuit size when encoding structured matrices or natural images, the \texttt{BITBLE} protocol optimizes rotation parameters. Its compression function resembles \texttt{FABLE}, eliminating single-qubit rotations with parameters below a threshold and reducible CNOT gates. 

The compression performance of \texttt{BITBLE} is evaluated for Laplace operators in Fig.~\ref{fig compression performance of BITBLE}. The protocol achieves excellent results for:
\begin{itemize}
    \item $1D$ discretized Laplace operators with periodic boundary conditions,
    \item $2D$ discretized Laplace operators,
\end{itemize}
requiring only 10\%--40\% of the maximum gate count for systems with $>6$ qubits. However, its efficiency degrades for $1D$ Laplace operators with \textit{non-periodic} boundary conditions, where it retains over 95\% of the maximum gates.

\begin{figure}[htbp]
    \centering
    \begin{subfigure}[htbp]{0.23\textwidth}
    \includegraphics[width=1.0\linewidth]{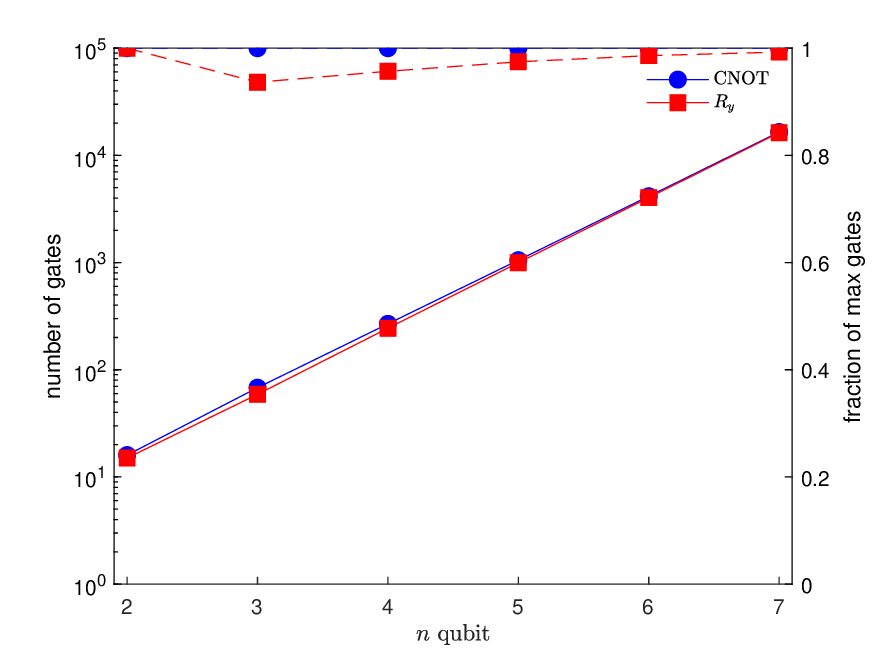}
    \caption{$1D$ non-periodic boundary condition Laplacian}
    \label{fig: BITBLE_for_1D_discretized_Laplace_operators_with_non_periodic_BC}
    \end{subfigure}
    \hfill
    \begin{subfigure}[htbp]{0.23\textwidth}
    \includegraphics[width=1.0\linewidth]{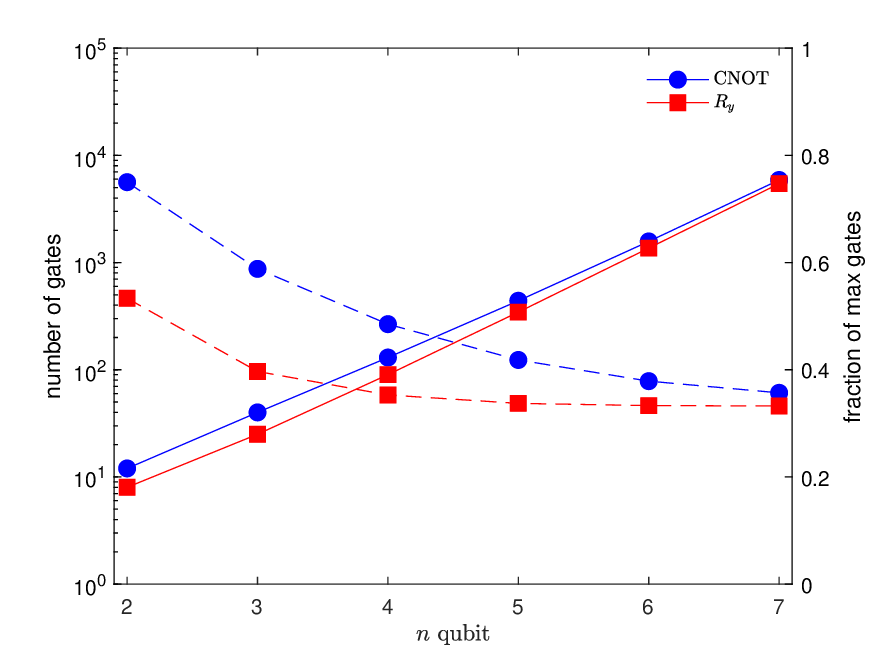}
    \caption{$1D$ periodic boundary condition Laplacian }
    \label{fig: BITBLE_for_1D_discretized_Laplace_operators_with_periodic_BC}
    \end{subfigure}
    \hfill
    \begin{subfigure}[htbp]{0.23\textwidth}
    \includegraphics[width=1.0\linewidth]{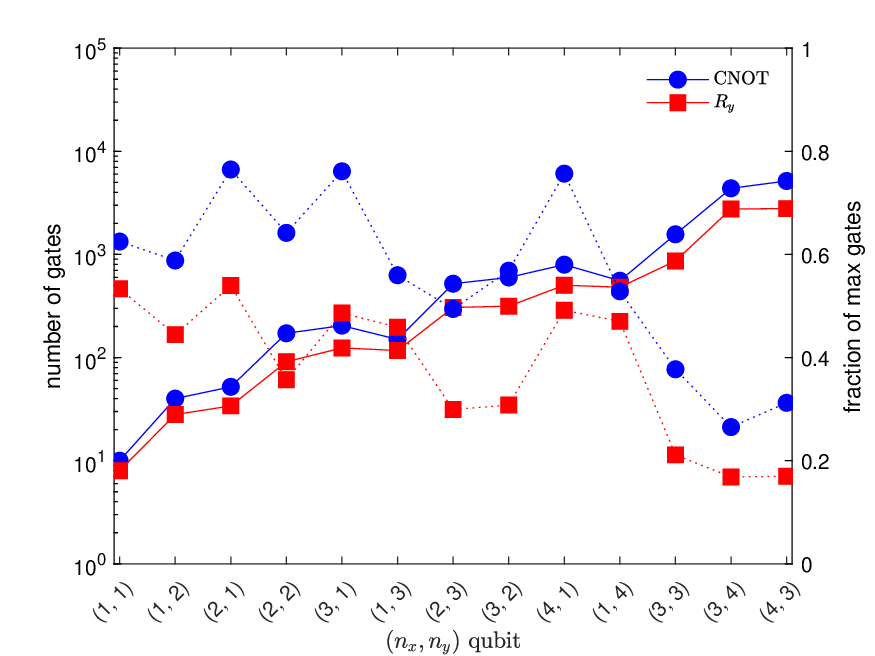}
    \caption{$2D$ non-periodic boundary condition Laplacian}
    \label{fig: BITBLE_for_2D_discretized_Laplace_operators_with_non_periodic_BC}
    \end{subfigure}
    \hfill
    \begin{subfigure}[htbp]{0.23\textwidth}
    \includegraphics[width=1.0\linewidth]{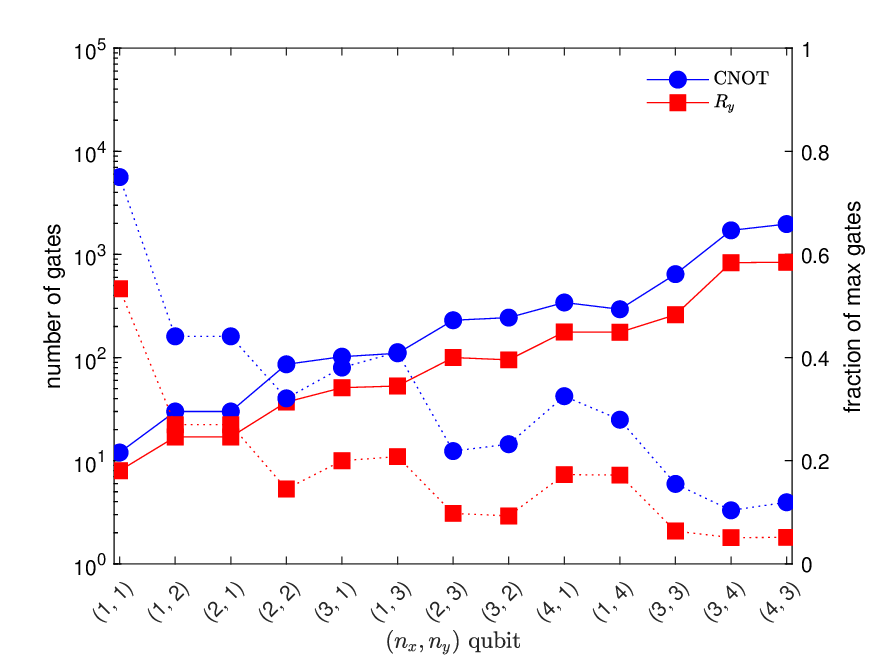}
    \caption{$2D$ periodic boundary condition Laplacian}
    \label{fig: BITBLE_for_2D_discretized_Laplace_operators_with_periodic_BC}
    \end{subfigure}
\caption{Performance compression of \texttt{BITBLE} and \texttt{FABLE}. The full lines show the absolute number of gates on the left $y$-axis and the dotted lines show the fraction of the maximum number of gates on the right $y$-axis. The maximum gates of the \texttt{BITBLE}\textsuperscript{1} protocols in encoding a real matrix of size $2^{n}\times 2^n$ are ($2^{2n}-1$) $R_y$ and ($2^{2n+1}+2^{n+1}-6$) CNOT.}
\label{fig compression performance of BITBLE}
\end{figure}


\section{Conclusion}
\label{section Conclusion}
In this paper, we introduced a block-encoding protocol—Binary Tree Block-Encoding (\texttt{BITBLE})—for generating quantum circuits that block-encode arbitrary target matrices. We proposed two decoupling multiplexor operation methods, permutative demultiplexor and recursive demultiplexor, to reduce the classical computational time of decoupling multiplexor operations, which serve as a subroutine in certain block-encoding protocols. Based on these two decoupling methods, the time of circuit synthesis and normalization factor for encoding a classical matrix have been improved.

We evaluated the effectiveness of \texttt{BITBLE} protocols using the \textit{size metric}---defined as the product of the normalization factor and the gate count---through several example problems, including image processing and encoding of discretized Laplacian operators. Among these protocols, \texttt{BITBLE}\textsuperscript{1} demonstrated superior performance, achieving both the fastest parameter computation time and the lowest size metrics when compared to \texttt{FABLE} and Qiskit's unitary synthesis. This advantage was particularly pronounced for high-dimensional matrices, including structured matrices arising in partial differential equations and natural image encoding.

These block-encoding protocols supported two types of normalization factors: the Frobenius norm $\|A\|_F$ and the $\mu_p(A)$ norm. Our numerical experiments validated the efficiency of parameter computation and the compression performance of \texttt{BITBLE}. Notably, even in serial implementations of the recursive demultiplexer, \texttt{BITBLE} showed promising results, with potential for further speedup through parallel computation.

In the future, one promising direction is the development of parameter computation methods for block-encoding large-scale classical matrices with low circuit depth. Another research direction is exploring the potential speed-up using parallel computation to decouple state-preparation, unitary synthesis, and block-encoding protocols.


\begin{appendices}
\section{Parameters' computing algorithm}
\label{appendix bitble}

The algorithms for computing the parameters in multiplexor operation-$Y$ and multiplexor operation-$Z$ are stated in Algorithm~\ref{alg paramater ry} and Algorithm~\ref{alg paramater rz}, respectively. The complete algorithms for computing multiplexor operation parameters in Binary Tree Block-Encodings (\texttt{BITBLE}\textsuperscript{1} and \texttt{BITBLE}\textsuperscript{2}) with normalization factor $\Vert A\Vert_F$ are stated in Algorithm~\ref{alg parameter bitble}, and Binary Tree Block-Encodings (\texttt{BITBLE}\textsuperscript{3}) with normalization factor $\mu_p(A)$ are stated in Algorithm~\ref{alg parameter bitble3}.

\begin{algorithm}
\caption{angle} 
	\begin{algorithmic}[1]
        \Require Vectors $[a_{j}]_{j=0}^{2^n-1}\in\mathbb{C}^{2^n}$ with $\vert a_j\vert = 1$ for $0\leq j\leq 2^n-1$
        \Ensure Vectors $[\varphi_{j}]_{j=0}^{2^n-1}\in\mathbb{R}^{2^n}$
        \State Get $\varphi_{j}$ such that $ e^{\frac{\varphi_{j}}{2}i}:= \text{Re}(a_{j}) + \text{Im}(a_{j})i$ for all $0\leq j\leq 2^{n}-1$
    \end{algorithmic} 
    \label{alg angles}
\end{algorithm}

\begin{algorithm}
	\caption{R$_\text{Y}$-angles (Compute Rotation-$Y$ binary tree parameters)} 
	\begin{algorithmic}[1]
        \Require The norm of state $[\vert\psi_j\vert]_{j=0}^{2^n-1}$ 
        \Ensure Rotation Parameters $[\varphi_{j}]_{j=0}^{2^n-2}$ 
        \State $a_{n,j}\gets \vert \psi_j\vert$ for all $0\leq j\leq 2^n-1$
		\For {$k = n-1,\ldots,0$}
            \For {$j= 0,\cdots,2^{k}-1$}
                \State $\begin{aligned}
			    &\varphi_{k,j} \gets \text{angle}\left(a_{k+1,2j} + a_{k+1,2j+1}\cdot i\right)
			    \end{aligned}$
                \State $a_{k,j}\gets\sqrt{\vert a_{k+1,2j}\vert^2 + \vert a_{k+1,2j+1}\vert^2}$
            \EndFor
		\EndFor
        \State $\varphi_{2^{k}+j-1}\gets\varphi_{k,j}$ for all $0\leq k\leq n-1$, $0\leq j\leq2^k-1$
	\end{algorithmic} 
    \label{alg paramater ry}
\end{algorithm}

\begin{algorithm}
	\caption{R$_\text{Z}$-angles (Compute Rotation-$Z$ binary tree parameters)} 
	\begin{algorithmic}[1]
        \Require The phases of state $[\phi_k ]_{k=0}^{2^n-1}$ 
        \Ensure Rotation parameters and global phase $\left([\theta_k]_{k=0}^{2^n-2},\theta_{-1}\right)$
        \For{$i = 1,\cdots,n$}
            \For{$0 \leq k\leq 2^{n-i+1}-1$}
                \State $a_k\gets\phi_k$
            \EndFor
            \For{$j = 0,\cdots, 2^{n-i}-1$}
                \State $\phi_j\gets \frac{a_{2j-1} + a_{2j}}{2}$
                \State $\phi_{2^{n-i} + j}\gets -a_{2j-1} + a_{2j}$
            \EndFor
        \EndFor
        \State $\theta_{-1}\gets- a_0 - a_1$ 
        \For{$k = 0,\cdots,2^n-2$}
            \State $\theta_{k}\gets \phi_{k+1}$
        \EndFor
	\end{algorithmic} 
    \label{alg paramater rz}
\end{algorithm}

\begin{algorithm}
	\caption{Compute multiplexor operation parameter in Binary Tree Block-Encodings with normalization factor $\Vert A\Vert_F$ (\texttt{BITBLE\textsuperscript{1}} and \texttt{BITBLE\textsuperscript{2}})} 
	\begin{algorithmic}[1]
        \Require $\left[A_{k,j}:=e^{i\phi_{k,j}}\vert A_{k,j}\vert\right]_{0\leq k,j\leq 2^n-1}$
        \Ensure Compute parameters of the binary tree block-encoding (BITBLE) protocol with normalization factor $\Vert A\Vert_F$ in Fig.~\ref{fig bitble quantum circuit}.
        \For {$j=1,\ldots,2^n$}
        \State $[\varphi_{k,j}]_{k=0}^{2^n-2} \gets \text{R}_\text{Y}\text{-angles}\left([\vert A_{k,j}\vert]_{k=0}^{2^n-1}\right)$
        \If{$A$ is complex}
        \State $\left([\theta_{k,j}]_{k=0}^{2^n-2}, \theta_{-1,j}\right) \gets  \text{R}_\text{Z}\text{-angles}\left([\phi_{k,j}]_{k=0}^{2^n-1}\right)$
        \EndIf
        \EndFor 
        \State $ [\varphi_{j}']_{j=0}^{2^n-2} \gets \text{R}_\text{Y}\text{-angles}\left(\{\Vert A_{\cdot, j}\Vert_2/\Vert A\Vert_F\}_{j=0}^{2^n-1}\right)$
        \State $\hat{\varphi}'_{k,j}\gets\varphi'_{2^k+j-1}$ for all $0\leq k\leq n-1$, $0\leq j\leq2^k-1$ 
	\end{algorithmic} 
    \label{alg parameter bitble}
\end{algorithm}

\begin{algorithm}
	\caption{Compute multiplexor operation parameter in Binary Tree Block-Encodings with normalization factor $\mu_p(A)$ (\texttt{BITBLE$^{3}$})} 
	\begin{algorithmic}[1]
        \Require $\left[A_{k,j}:=e^{i\phi_{k,j}}\vert A_{k,j}\vert\right]_{0\leq k,j\leq 2^n-1}$ and $p\in[0,1]$
        \Ensure Compute parameters of the binary tree block-encoding (BITBLE) protocol with normalization factor $\mu_p(A)$.
        \For {$j=1,\ldots,2^n$}
        \State $[\varphi_{k,j}]_{k=0}^{2^n-2} \gets \text{R}_\text{Y}\text{-angles}\left(\left[\frac{\vert A_{k,j}\vert^p}{\sqrt{\Vert A_{\cdot,j}\Vert_{2p}^{2p}}}\right]_{k=0}^{2^n-1}\right)$
        \State $[\tilde\varphi_{k,j}]_{k=0}^{2^n-2} \gets \text{R}_\text{Y}\text{-angles}\left(\left[\frac{\vert A_{j,k}\vert^{1-p}}{\sqrt{\Vert A_{k,\cdot}\Vert_{2(1-p)}^{2(1-p)}}}\right]_{k=0}^{2^n-1}\right)$
        \If{$A$ is complex}
        \State $\left([\theta_{k,j}]_{k=0}^{2^n-2}, \theta_{-1,j}\right) \gets  \text{R}_\text{Z}\text{-angles}\left([\phi_{k,j}]_{k=0}^{2^n-1}\right)$
        \EndIf
        \EndFor 
        \State $ [\chi_{n,j}^R]_{j=0}^{2^n-1} \gets \text{R}_\text{Y}\text{-angles}\left(\left[\arccos\left(\sqrt{\frac{\Vert A_{\cdot,j}\Vert_{2p}^{2p}}{S_{2p}(A^T)}}\right)\right]_{j=0}^{2^n-1}\right)$
        \State $\begin{aligned}
             &[\chi_{n+1,j}^L]_{j=0}^{2^n-1} \\
             &\quad \gets \text{R}_\text{Y}\text{-angles}\left(\left[\arccos\left(\sqrt{\frac{\Vert A_{k,\cdot}\Vert_{2(1-p)}^{2(1-p)}}{S_{2(1-p)}(A)}}\right)\right]_{k=0}^{2^n-1}\right)
        \end{aligned}$ 
	\end{algorithmic} 
    \label{alg parameter bitble3}
\end{algorithm}

The quantum circuit of Binary Tree Block-Encodings with normalization factor $\mu_p(A)$ (\texttt{BITBLE$^{3}$}) is shown in Fig.~\ref{fig bitble quantum circuit3}.

\begin{figure*}[htbp]
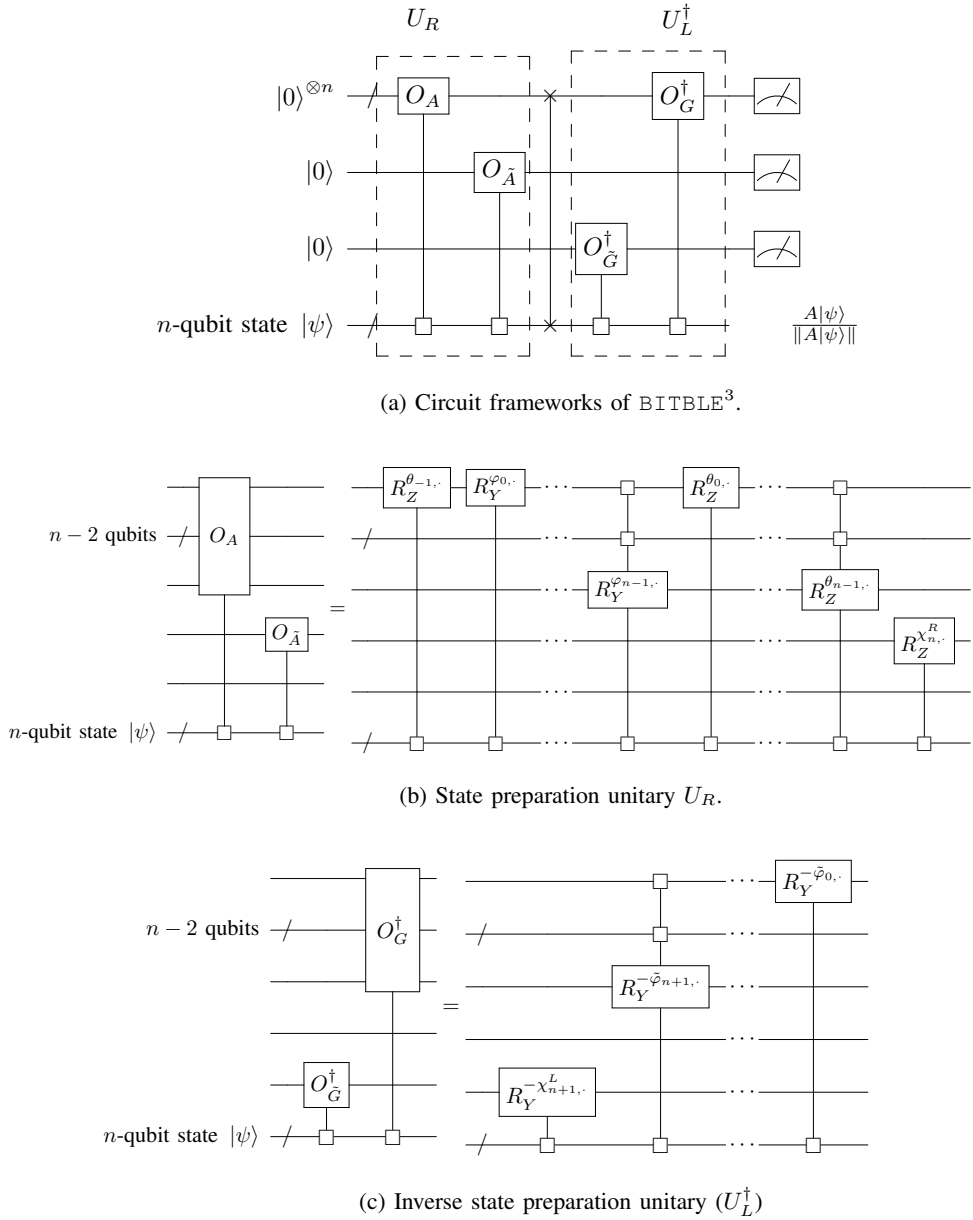

    \begin{subfigure}[htbp]{1.0\textwidth}
    \centering
    \[\resizebox{0.35\textwidth}{!}{$
    \begin{myqcircuit*}{1}{1.0}
		&&& U_{R} &&&& U_{L}^{\dagger} \\
		& \lstick{\ket{0}^{\otimes n}} & {/} \qw & \gate{O_{A}} & \qw & \qswap & \qw & \gate{O_{G}^{\dagger}} & \qw & \meter \\
		& \lstick{\ket{0}} & \qw & \qw & \gate{O_{\tilde{A}}} & \qw & \qw & \qw & \qw & \meter \\
		& \lstick{\ket{0}} & \qw & \qw & \qw & \qw & \gate{O_{\tilde{G}}^{\dagger}} & \qw & \qw & \meter \\
		& \lstick{n\mbox{-qubit state }\ket{\psi}} & {/} \qw & \ctrlsq{-3} & \ctrlsq{-2} & \qswap\qwx[-3] & \ctrlsq{-1} & \ctrlsq{-3} & \qw & \rstick{\frac{A\ket{\psi}}{\left\|A\ket{\psi}\right\|}} 
		\gategroup{2}{4}{5}{5}{1.7em}{--}
		\gategroup{2}{7}{5}{8}{1.7em}{--}
	\end{myqcircuit*}$}\]
    \caption{Circuit frameworks of $\texttt{BITBLE}^3$.}
    \end{subfigure}
    \\[\baselineskip]
    \begin{subfigure}[htbp]{1.0\textwidth}
    \centering
    \[\resizebox{0.6\textwidth}{!}{$
	\begin{myqcircuit*}{0.75}{0.75}
		& & \qw & \multigate{2}{O_{A}} & \qw & \qw \\
		& \lstick{n-2\mbox{ qubits}} & {/} \qw & \ghost{O_{A}} & \qw & \qw \\
		& & \qw & \ghost{O_{A}} & \qw & \qw \\
		& & \qw & \qw & \gate{O_{\tilde{A}}} & \qw \\
		& & \qw & \qw & \qw & \qw \\
		& \lstick{n\mbox{-qubit state }\ket{\psi}} & {/} \qw & \ctrlsq{-3} & \ctrlsq{-2} & \qw 
	\end{myqcircuit*} 
	 = 
	\begin{myqcircuit}
		& \qw & \gate{R_{Z}^{\theta_{-1,\cdot}}} & \gate{R_{Y}^{\varphi_{0,\cdot}}} & \qw & {\cdots} && \ctrlsq{1} & \gate{R_{Z}^{\theta_{0,\cdot}}} & \qw & {\cdots} && \ctrlsq{1} & \qw & \qw \\
		& {/} \qw & \qw & \qw & \qw & {\cdots} && \ctrlsq{1} & \qw & \qw & {\cdots} && \ctrlsq{1} & \qw & \qw \\
		& \qw & \qw & \qw & \qw & {\cdots} && \gate{R_{Y}^{\varphi_{n-1,\cdot}}} & \qw & \qw & {\cdots} && \gate{R_{Z}^{\theta_{n-1,\cdot}}} & \qw & \qw \\
		& \qw & \qw & \qw & \qw & {\cdots} && \qw & \qw & \qw & {\cdots} && \qw & \gate{R_{Z}^{\chi_{n,\cdot}^{R}}} & \qw \\
		& \qw & \qw & \qw & \qw & {\cdots} && \qw & \qw & \qw & {\cdots} && \qw & \qw & \qw \\
		& {/} \qw & \ctrlsq{-5} & \ctrlsq{-5} & \qw & {\cdots} && \ctrlsq{-3} & \ctrlsq{-5} & \qw & {\cdots} && \ctrlsq{-3} & \ctrlsq{-2} & \qw \\
	\end{myqcircuit}	
	$}
	\]
    \caption{State preparation unitary $U_R$.}
    \end{subfigure}
    \\[\baselineskip]
    \begin{subfigure}[htbp]{1.0\textwidth}
    \centering
    \[\resizebox{0.45\textwidth}{!}{$
    \begin{myqcircuit*}{0.3}{0.75}
		& & \qw & \qw & \multigate{2}{O_{G}^{\dagger}} & \qw \\
		& \lstick{n-2\mbox{ qubits}} & {/} \qw & \qw & \ghost{O_{G}^{\dagger}} & \qw \\
		& & \qw & \qw & \ghost{O_{G}^{\dagger}} & \qw \\
		& & \qw & \qw & \qw & \qw \\
		& & \qw & \gate{O_{\tilde{G}}^{\dagger}} & \qw & \qw \\
		& \lstick{n\mbox{-qubit state }\ket{\psi}} & {/} \qw & \ctrlsq{-1} & \ctrlsq{-3} & \qw 
	\end{myqcircuit*} 
	 = 
	\begin{myqcircuit*}{0.2}{0.75}
		& \qw & \qw & \ctrlsq{1} & \qw & {\cdots} && \gate{R_{Y}^{-\tilde{\varphi}_{0,\cdot}}} & \qw \\
		& {/} \qw & \qw & \ctrlsq{1} & \qw & {\cdots} && \qw & \qw \\
		& \qw & \qw & \gate{R_{Y}^{-\tilde{\varphi}_{n+1,\cdot}}} & \qw & {\cdots} && \qw & \qw \\
		& \qw & \qw & \qw & \qw & {\cdots} && \qw & \qw \\
		& \qw & \gate{R_{Y}^{-\chi_{n+1,\cdot}^{L}}} & \qw & \qw & {\cdots} && \qw & \qw \\
		& {/} \qw & \ctrlsq{-1} & \ctrlsq{-3} & \qw & {\cdots} && \ctrlsq{-5} & \qw \\
	\end{myqcircuit*}
    $}
	\]
    \caption{Inverse state preparation unitary ($U_L^\dagger$)}
    \end{subfigure}
\caption{Quantum circuit of Binary Tree Block-encoding with normalization factor $\mu_p(A)$ (\texttt{BITBLE}\textsuperscript{3}).}
\label{fig bitble quantum circuit3}
\end{figure*}

\section{Operation of BITBLE Programme}

All the code is open-course, and the installation process can be found in \href{https://github.com/zexianLIPolyU/BITBLE-SIABLE_matlab}{github}. The following is a demonstration of the code.

\textbf{1) Quantum State Preparation:}
\begin{lstlisting}
%Quantum State Preparation
cd("state_preparation");
addpath('QCLAB');
logging=true; %record
circuit_sim=true;
N=4;
n=log2(N);
state_complex=randn(N,1)+randn(N,1).*1i;
state_comple=state_complex...
./norm(state_complex);
[circuit0,info0]=...
binary_tree_statepreparation(...
state_complex,logging,circuit_sim);
circuit0.draw();
\end{lstlisting}

Output:
$$
\begin{myqcircuit}
\lstick{q_{0}}	&	\gate{R_z}	&	\gate{R_y}	&	\ctrl{1}	&	\qw	&	\ctrl{1}	&	\gate{R_z}	&	\ctrl{1}	&	\qw	&	\ctrl{1}	&	\qw	\\
\lstick{q_{1}}	&	\gate{R_y}	&	\qw	&	\targ	&	\gate{R_y}	&	\targ	&	\gate{R_z}	&	\targ	&	\gate{R_z}	&	\targ	&	\qw	\\
\end{myqcircuit}
$$

\textbf{2) Binary tree block encoding (\texttt{BITBLE}\textsuperscript{1}) using recursive demultiplexor:}
\begin{lstlisting}
% Block-encoding with normalization
% factor$\Vert A\Vert_F$
cd("bitble-qclab");
addpath('QCLAB');
n=1;
A=randn(pow2(n),pow2(n))+...
randn(pow2(n),pow2(n)).*1j;
offset=0;
logging=true;
compr_type='cutoff';
compr_val=1e-8;
circuit_sim=true;
[circuit1,normalization_factor1,info1]...
=bitble(A,compr_type,compr_val,...
logging,offset,circuit_sim);
circuit1.draw();
\end{lstlisting}

\textbf{3) Binary tree block encoding (\texttt{BITBLE}\textsuperscript{2}) using recursive demultiplexor:}
\begin{lstlisting}
% Block-encoding with normalization
% factor$\Vert A\Vert_F$
cd("bitble-qclab");
addpath('QCLAB');
n=1;
A=randn(pow2(n),pow2(n))+...
randn(pow2(n),pow2(n)).*1j;
offset=0;
logging=true;
compr_type='cutoff';
compr_val=1e-8;
circuit_sim=true;
[circuit2,normalization_factor2,info2]...
=bitble2(A,compr_type,compr_val,...
logging,offset,circuit_sim);
circuit2.draw();
\end{lstlisting}

\textbf{4) Binary tree block encoding (\texttt{BITBLE}\textsuperscript{3}) using permutative demultiplexor with normalization factor $\mu_p(A)$}
\begin{lstlisting}
% Block-encoding with normalization
% factor $\mu_p(A)$
cd("bitble-qclab");
addpath('QCLAB');
n=1;
A=randn(pow2(n),pow2(n))+...
randn(pow2(n),pow2(n)).*1j;
offset=0;
logging=true;
compr_type='cutoff';
compr_val=1e-8;
circuit_sim=true;
[circuit3,normalization_factor3,info3]...
=bitble3(A,compr_type,compr_val,...
logging,offset,circuit_sim);
circuit3.draw();
\end{lstlisting}

\begin{figure*}[htbp]
    \begin{subfigure}[htbp]{1.0\textwidth}
    \centering
    \[\begin{myqcircuit*}{1}{1.5}
    \lstick{q_{0}}	&	\gate{R_z}	&	\targ	&	\gate{R_z}	&	\targ	&	\gate{R_y}	&	\targ	&	\gate{R_y}	&	\targ	&	\gate{R_z}	&	\targ	&	\gate{R_z}	&	\targ	&	\qswap\qwx[1]	&	\gate{R_y}	&	\qw  \\
    \lstick{q_{1}}	&	\qw	&	\ctrl{-1}	&	\qw	&	\ctrl{-1}	&	\qw	&	\ctrl{-1}	&	\qw	&	\ctrl{-1}	&	\qw	&	\ctrl{-1}	&	\qw	&	\ctrl{-1}	&	\qswap	&	\qw	&	\qw	\\
    \end{myqcircuit*}\]
    \hfill
    \caption{Output of \texttt{BITBLE}\textsuperscript{1}}
    \end{subfigure}
    \hfill
    \begin{subfigure}[htbp]{1.0\textwidth}
    \centering
    \[\begin{myqcircuit*}{0.2}{0.5}
    \lstick{q_{0}}	&	\gate{R_z}	&	\targ	&	\gate{R_z}	&	\targ	&	\gate{R_y}	&	\targ	&	\gate{R_y}	&	\targ	&	\gate{R_z}	&	\targ	&	\gate{R_z}	&	\targ	&	\qw	&	\qw	&	\qw	&	\qw	&	\qswap\qwx[3]	&	\qw	&	\qw	&	\qw	&	\qw	&	\targ	&	\gate{R_y}	&	\targ	&	\gate{R_y}	&	\qw	 &  \\
    \lstick{q_{1}}	&	\qw	&	\qw	&	\qw	&	\qw	&	\qw	&	\qw	&	\qw	&	\qw	&	\qw	&	\qw	&	\qw	&	\qw	&	\gate{R_y}	&	\targ	&	\gate{R_y}	&	\targ	&	\qw	&	\qw	&	\qw	&	\qw	&	\qw	&	\qw	&	\qw	&	\qw	&	\qw	&	\qw	 &  \\
    \lstick{q_{2}}	&	\qw	&	\qw	&	\qw	&	\qw	&	\qw	&	\qw	&	\qw	&	\qw	&	\qw	&	\qw	&	\qw	&	\qw	&	\qw	&	\qw	&	\qw	&	\qw	&	\qw	&	\targ	&	\gate{R_y}	&	\targ	&	\gate{R_y}	&	\qw	&	\qw	&	\qw	&	\qw	&	\qw	  \\
    \lstick{q_{3}}	&	\qw	&	\ctrl{-3}	&	\qw	&	\ctrl{-3}	&	\qw	&	\ctrl{-3}	&	\qw	&	\ctrl{-3}	&	\qw	&	\ctrl{-3}	&	\qw	&	\ctrl{-3}	&	\qw	&	\ctrl{-2}	&	\qw	&	\ctrl{-2}	&	\qswap	&	\ctrl{-1}	&	\qw	&	\ctrl{-1}	&	\qw	&	\ctrl{-3}	&	\qw	&	\ctrl{-3}	&	\qw	&	\qw	\\
    \end{myqcircuit*}\]
    \hfill
    \caption{Output of \texttt{BITBLE}\textsuperscript{3}}
    \end{subfigure}
    \caption{Output of quantum circuit of three Binary Tree Block Encoding protocols encoding $2\times2$ complex matrix using QCLAB.}
\end{figure*}

\end{appendices}


\section*{Acknowledgment}

\bibliographystyle{IEEEtran}
\bibliography{bibtex/bib/IEEEexample}

\end{document}